\newtheorem{definition}{Definition}
\newtheorem{theorem}{Theorem}
\newtheorem*{theorem*}{Theorem}
\newtheorem{lemma}{Lemma}
\newtheorem{example}{Example}
\newtheorem{assumption}{Assumption}
\newtheorem{claim}{Claim}
\newcommand{\alg}{\textnormal{Alg}}
\newcommand{\notshow}[1]{{}}
\newcommand{\pos}{\text{pos}}
\newcommand{\satcc}{\text{\sc -CC-Winner}}
\newcommand{\satm}{\text{\sc -Monroe-Winner}}
\newcommand{\cc}{\text{Chamberlin-Courant}}
\newcommand{\polynomial}{\text{\rm poly} }
\newcommand{\winner}[1]{\text{\sc #1-Winner}}
\newcommand{\wmg}{\text{WMG}}
\newcommand{\ks}{{\sc KemenyScore}}
\newcommand{\ds}{{\sc DodgsonScore}}
\newcommand{\ys}{{\sc YoungScore}}
\newcommand{\efas}{\text{\sc EFAS}}
\newcommand{\xc}{\text{\sc X3C}}
\newcommand{\topgroup}{\mathrm{Top}}
\newcommand{\ma}{\mathcal A}
\newcommand{\mm}{\mathcal M}
\newcommand{\mn}{\mathcal N}
\newcommand{\ml}{\mathcal L}
\newcommand{\kt}{\text{KT}}
\newcommand{\applast}{\mathbf{AppLast}}
\newcommand{\gcyc}{{G_{3c}}}
\newcommand{\citep}[1]{\cite{#1}}
\newcommand{\citet}[1]{\cite{#1}}
\begin{document}
\title{Beyond the Worst Case: Semi-Random Complexity Analysis of Winner Determination}
%
%
\author{Lirong Xia \\ Rensselaer Polytechnic Institute\\ {\tt xialirong@gmail.com}\and
Weiqiang Zheng\\Yale University\\ {\tt weiqiang.zheng@yale.edu}}

\date{}
%
\maketitle              
\begin{abstract}
The computational complexity of winner determination is a classical and important problem in computational social choice. Previous work based on worst-case analysis has established NP-hardness of winner determination for some classic voting rules, such as Kemeny, Dodgson, and Young. 

In this paper, we revisit the classical problem of winner determination through the lens of {\em semi-random analysis}, which is a worst average-case analysis where the preferences are generated from a distribution chosen by the adversary. Under a natural class of semi-random models that are inspired by recommender systems, we prove that winner determination remains hard for Dodgson, Young, and some multi-winner rules such as the \cc{} rule and the Monroe rule. Under another natural class of semi-random models that are extensions of the Impartial Culture, we show that winner determination is hard for Kemeny, but is easy for Dodgson. This illustrates an interesting separation between Kemeny and Dodgson.
\end{abstract}

\section{Introduction}
Voting is one of the most popular methods for group decision-making. In large-scale, high-frequency group decision-making scenarios, it is highly desirable that the winner can be computed in a short amount of time. 
The complexity of \emph{winner determination} under common voting rules is thus not only a classic theoretical problem in  computational social choice~\cite[chapter 4, 5]{Brandt2016:Handbook}, but also an important consideration in practice.

In this paper, we focus on several classic voting rules: the Kemeny rule, the Dodgson rule, and the Young rule, whose winner determination problems are denoted as \ks{}, \ds{}, and \ys{}, respectively. The Kemeny rule, which is closely related to the Feedback Arc Set problem~\citep{Alon06:Ranking,Ailon2008:Aggregating}, is a classical method for recommender systems  and information retrieval~\citep{Dwork01:Rank}. The Dodgson rule and the Young rule have also been extensively studied in the literature~\citep{Hemaspaandra97:Exact,Rothe2003:Exact,Caragiannis09:Approximability,Brandt2016:Handbook}.

Previous work has established the (worst-case) NP-hardness of winner determination under the Kemeny rule, the Dodgson rule, and the Young rule~\citep{Bartholdi89:Voting,Rothe2003:Exact}. 
Using average-case analysis, McCabe-Dansted et al. \cite{McCabe-Dansted2008:Approximability} and Homan and Hemaspaandra \cite{Homan2009:Guarantees} showed that \ds{} admits an efficient algorithm that succeeds with high probability, where each ranking is generated i.i.d.~uniformly,  known as the \emph{Impartial Culture (IC)} assumption in social choice. Unfortunately, IC or generally any i.i.d. distribution has been widely criticized of being unrealistic (see, e.g.,  \cite[p.~30]{Nurmi1999:Voting}, \cite[p.~104]{Gehrlein2006:Condorcets}, and~\citep{Lehtinen2007:Unrealistic}). It remains unknown whether there exists an efficient algorithm for \ds{} beyond IC. This motivates us to ask the following question: 
\begin{equation*}
    \text{\emph{What is the complexity of winner determination beyond worst-case analysis and IC?}}
\end{equation*}

One promising idea is to 
tackle this question through the lens of \emph{smoothed complexity analysis} \citep{Spielman2009:Smoothed,Baumeister2020:Towards}, a beautiful and powerful framework for analyzing the performance of algorithms in practice. Smoothed analysis can be seen as a worst average-case analysis, where the adversary first arbitrarily chooses an instance, and then Nature adds random noise (perturbation) to it, based on which the expected runtime of an algorithm is evaluated. Smoothed analysis explains why the simplex method is fast despite its worst-case exponential time complexity~\citep{Spielman2004:Smoothed}. It has been successfully applied to many fields to understand the practical performance of algorithms, see the survey by Spielman and Teng~\cite{Spielman2009:Smoothed}.


Smoothed analysis belongs to the more general approach of complexity analysis under {\em semi-random models}~\citep{Blum1990:Some,Feige2021:Introduction}, where the problem instance contains an adversarial component and a random component. In this paper, we adopt the semi-random  model called the \emph{single-agent preference model} \citep{Xia2020:The-Smoothed}, where the adversary chooses a preference distribution for each agent from a set $\Pi$ of distributions. Note that if $\Pi$ consists of only the uniform distribution, then the model is equivalent to IC. By varying $\Pi$, the model can provide a smooth transition from average-case analysis to worst-case analysis.  Under this model, Xia and Zheng~\cite{Xia2021:The-Smoothed} proved the semi-random hardness of computing Kemeny ranking and Slater ranking with mild assumptions. 
However, their hardness results do  not imply hardness of \ks{} under the same model, because \ks{} is easier than computing the Kemeny ranking (see Definition \ref{dfn:ks}).
The semi-random complexity of the Dodgson rule and the Young rule were also left as open questions \citep{Xia2021:The-Smoothed}. 

\paragraph{\bf Our contributions.}
We provide the first set of results on the computational complexity of winner determination  under the following two classes of semi-random models.

The first class of models are inspired by recommender systems and information retrieval, where the number of alternatives $m$ can be very large and it is inefficient for an intelligent system to learn the total ranking. In such cases, one often uses top-$K$ ranking algorithms that only recover the top-$K$ ranking with high accuracy for $K = o(m)$~\citep{pmlr-v70-mohajer17a,chen2018nearly}. Similarly in social choice, the collected preference from an agent is more robust over her few top-ranked alternatives and may be much more noisy over the remaining alternatives (see Example \ref{ex:partial alternative model}). Formally, we capture such features in Assumption \ref{asmpt:general-hard}.
Then, we prove in Theorem~\ref{thm:hard-ds} and Theorem \ref{thm:hard-ys} that \ds{} and \ys{} remain hard under Assumption \ref{asmpt:general-hard} unless NP=ZPP.  
Similar semi-random hardness results  also hold for some \emph{multi-winner} rules, i.e., the \cc{} rule and the Monroe rule (Theorem~\ref{thm:hard-cc-monroe}).
    
The second class of semi-random models   are called $\alpha$-Impartial Culture ($\alpha$-IC for short, see Definition~\ref{dfn:alpha-IC}) where $\alpha \in [0,1]$. They are a relaxation of IC such that a single ranking receives probability $1-\alpha$ and the other rankings are uniformly distributed. When $\alpha$ is $\frac{1}{\mathcal{O}(\text{poly}(m))}$  away from $1$, we 
illustrate an interesting separation between the complexity of \ks{} and that of \ds{}: winner determination is hard for \ks{} (Theorem~\ref{thm:hard-ks}) while being easy for \ds{} (Theorem~\ref{thm:dodgson-easy-alphaIC}).
    

\subsection{Related Works and Discussions}

\paragraph{\bf Smoothed and semi-random analysis.} 
Semi-random models have been widely adopted to analyse the performance of algorithms in practice and to circumvent worst-case computational hardness in the field of combinatorial optimization \citep{blum1995coloring,feige1998heuristics}, mathematical programming \citep{Spielman2004:Smoothed}, and recently in algorithmic game theory \citep{psomas2019smoothed,boodaghians2020smoothed,smoothed_focs_2020,blum2021incentive,bai2022fair}. 
We refer the readers to recent surveys of semi-random models \citep{Feige2021:Introduction} and beyond worst-case analysis \citep{roughgarden2021beyond} for a comprehensive literature review. We mention here that the partial alternative randomization model in Example \ref{ex:partial alternative model} is inspired by the partial bit randomization model which has been applied to smoothed complexity analysis~\citep{goos_smoothed_2003} and smoothed competitive ratio analysis~\citep{becchetti_average-case_2006}.

Recently, semi-random analysis has also been proposed in the field of social choice \citep{Baumeister2020:Towards,Xia2020:The-Smoothed}.The smoothed probability of paradoxes and ties, and strategyproofness in voting are studied \citep{Xia2020:The-Smoothed,xia2021likely,xia2021semi,ding2021approximately}. As mentioned above, Xia and Zheng~\cite{Xia2021:The-Smoothed} studied complexity of computing Kemeny and Slater rankings under semi-random models. We are not aware of other semi-random complexity results in computational social choice, which motivates this work. 

Beier and V{\"o}cking~\cite{beier_typical_2006} studied the case of the integer linear programs (ILPs) over the unit cube and showed that a problem has polynomial smoothed complexity if and only if it admits a pseudo-polynomial algorithm. Since winner determination under voting rules studied in this paper can also be formulated as ILPs, one might be tempted to think that  the results in \citep{beier_typical_2006} also apply to the single-agent preference model.  However, this is not true because
they only considered continuous perturbation for \emph{real numbers}, while the set of rankings is \emph{discrete}. Their conclusion works for discrete combinatorial optimization problems only if the continuous noise is added to the so-called \emph{stochastic parameters} that are real numbers, so that the problem's combinatorial structure remains unchanged, which is not the case of our setting.  

\notshow{
Smoothed complexity analysis was recently introduced to computational social choice by ~\cite{Baumeister2020:Towards} and
\cite{Xia2021:The-Smoothed} studied smoothed/semi-random complexity of computing Kemeny and Slater rankings. We are not aware of other semi-random complexity results in computational social choice, which motivates this work. 

\cite{beier_typical_2006} studied the case of the integer linear programs (ILPs) over the unit cube and showed that a problem has polynomial smoothed complexity if and only if it admits a pseudo-polynomial algorithm. Since winner determination under voting rules studied in this paper can also be formulated as ILPs, one might be tempted to think that  the results in \citep{beier_typical_2006} also apply to the single-agent preference model.  However, this is not true because
\cite{beier_typical_2006} only considered continuous perturbation for \emph{real numbers}, while the set of rankings is \emph{discrete}. Their conclusion works for discrete combinatorial optimization problems only if the continuous noise is added to the so-called \emph{stochastic parameters} that are real numbers, so that the problem's combinatorial structure remains unchanged, which is not the case of our setting.  

There is a large body of literature on semi-random analysis and smoothed analysis in mathematical programming, machine learning, numerical analysis, discrete math, combinatorial optimization, etc. 
We refer the readers to~\citep{roughgarden2021beyond} for a recent survey. The partial alternative randomization model in Example \ref{ex:partial alternative model} is inspired by the partial bit randomization model which has been applied to smoothed complexity analysis~\citep{goos_smoothed_2003} and smoothed competitive ratio analysis~\citep{becchetti_average-case_2006}.}


\paragraph{\bf Complexity of winner determination.} 
There is a large body of literature on worst-case computational complexity of winner determination under various voting rules. Bartholdi et al.~\cite{Bartholdi89:Voting} proved that computing \ds{} and \ys{} are NP-hard, respectively. They also provided the NP-completeness of \ks{}, which holds even for only four voters~\citep{Dwork01:Rank}. The problem of computing Dodgson winner, Young ranking, and Kemeny ranking were proved to be   $\Theta_2^{\text{P}}$ complete \citep{Hemaspaandra97:Exact,Rothe2003:Exact,Hemaspaandra05:Complexity}. 



\section{Model and Preliminaries}\label{sec:model}
\paragraph{\bf Basics of voting.}
Let $\ma_m=\{a_1,\ldots,a_m\}$ denote the set of $m$ alternatives and $\ml(\ma_m)$ the set of rankings (linear orders) over $\ma_m$.
A {\em (preference) profile} $P\in \ml(\ma_m)^n$ is a collection of $n$ agents' rankings, which is also called their {\em preferences}. Throughout the paper, we assume without loss of generality that $m \ge 3$ since winner determination is easy for 2 alternatives. For any ranking $R \in \ml(\ma_m)$, we denote  $\topgroup_K(R)$ the top-$K$ ranking of $R$.  For a permutation $\sigma$ over $\ma_m$ and any distribution $\pi$ over $\ml(\ma_m)$, we denote $\sigma(\pi)$ the permuted distribution where $\Pr_{\sigma(\pi)}(\sigma(R)) = \Pr_\pi(R)$ for all $R \in \ml(\ma_m)$.

\paragraph{\bf The Dodgson rule, the Young rule and the Kemeny rule.}
The Condorcet winner of preference profile $P$ is defined as the alternative $a \in \ma_m$ who is preferred to every $b \in \ma_m$ by strictly more than half of the agents. 
The {\em Dodgson score} of $a$ in $P$ is defined as the smallest number of sequential exchanges of adjacent alternatives in rankings of $P$ to make  $a$ the Condorcet winner. 
The {\em Young score} of $a$ in $P$ is defined as the size of the largest subset of preferences where $a$ is the Condorcet winner. The Dodgson rule chooses the alternatives with the lowest Dodgson score as winners, and the Young rule chooses the alternatives with the highest Young score as  winners. The winner determination problems of the Dodgson rule and the Young rule are defined as follows.

\begin{definition}[ \ds{} and \ys{}] Given $P\in \ml(\ma_m)^n$, $a \in \ma_m$, and $t \in \mathbb{N}$, in \ds{} (respectively, \ys{}), we are asked to decide whether the Dodgson score (respectively, Young score) of $a$ in $P$ is at most (respectively, at least) $t$. 
\end{definition}

The {\em Kendall's Tau distance (KT distance)} between two linear orders $R,R'\in\ml(\ma)$, denoted by $\kt(R,R')$,  is the number of pairwise disagreements between $R$ and $R'$. Given a  profile $P$ and a linear order $R$, the {\em KT distance} between $R$ and $P$ is defined to be $ \kt(P,R) = \sum_{R'\in P}\kt(R,R')$. The {\em Kemeny score} of an alternative $a$ in $P$ is defined as the minimum KT distance between any linear order that ranks $a$ at the top. The Kemeny rule chooses the alternatives with the lowest Kemeny score. Besides, the Kemeny ranking is defined as the ranking with minimum KT distance to $P$. The winner determination problem of the Kemeny rule is defined as follows:
\begin{definition}[\bf \ks{}]\label{dfn:ks}
Given $P\in \ml(\ma_m)^n$ and $t\in \mathbb{N}$, in \ks{}, we are asked to decide if there exists an alternative $a \in \ma_m$ whose Kemeny score is at most $t$. 
\end{definition}
If we can compute the Kemeny ranking, then we can compute its KT distance to $P$ in polynomial time and then decides \ks{}. Thus \ks{} is easier than computing the Kemeny ranking.

\paragraph{\bf Semi-random complexity analysis.}
We use the following semi-random model, proposed in \citep{Xia2020:The-Smoothed} and used for semi-radom complexity analysis in \citep{Xia2021:The-Smoothed}.

\begin{definition}[\bf Single-agent preference model~\citep{Xia2020:The-Smoothed}]A {\em single-agent preference model} for $m$ alternatives is denoted by $\mm_m=(\Theta_m,\ml(\ma_m),\Pi_m)$.  $\Pi_m$ is a set of distributions over $\ml(\ma_m)$ indexed by a parameter space $\Theta_m$ such that for each parameter $\theta \in \Theta_m$, $\pi_{\theta} \in \Pi_m$ is its corresponding distribution. 
\end{definition}
We say $\mm_m$ is {\em P-samplable} if there exists a poly-time sampling algorithm for each distribution in $\Pi_m$. It is the ``most natural restriction'' on general distributions, which is less restrictive than the commonly-studied {\em P-computable} distributions~\cite[p.~17,18]{Bogdanov2006:Average-Case}.
We say $\mm_m$ is {\em neutral} if for any $\pi\in \Pi_m$ and any permutation $\sigma$ over $\ma_m$, we have $\sigma(\pi)\in \Pi_m$.  
Note that winner determination under all the above voting rules is in P when $m$ is bounded above by a constant. Therefore, we are given a sequence of single-agent preference models $\vec\mm=\{\mm_m=(\Theta_m,\ml(\ma_m),\Pi_m):m\ge 3\}$. We say $\vec\mm$ is P-samplable (respectively, neutral) if $\Pi_m$ is P-samplable (respectively, neutral) for any $m\ge3$. 

We introduce the following generalization of the Impartial Culture model, which is P-samplable and neutral. 
\begin{definition}[\bf $\alpha$-Impartial Culture]\label{dfn:alpha-IC}
Fix $\alpha \in [0,1]$. {\em $\alpha$-Impartial Culture ($\alpha$-IC)} is a single-agent preference model $\mm_m=(\Theta_m,\ml(\ma_m),\Pi_m)$ such that $\Theta_m = \ml(\ma_m)$ and for each $R \in \ml(\ma_m)$, distribution $\pi_R$ is defined as \[\Pr_{R'\sim \pi_R}[R'] = \frac{\alpha}{m!} + (1-\alpha)\bm 1[R'=R],\]
where $\bm 1[R'=R] = 1$ if $R = R'$ and $\bm 1[R'=R] = 0$ otherwise.
Fix $\vec\alpha = (\alpha_m)_{m\ge 3}$ such that $\alpha_m \in [0,1]$ for all $m \ge 3$. Denote $\vec\alpha$-IC the sequence of models $\{\alpha_m\text{-IC}:m\ge3\}$. It is easy to see that $\vec\alpha$-IC is P-samplable and neutral.
\end{definition}

The semi-random profile $P$ according to $\mm_m$ is generated as follows. First, the adversary chooses $\vec\pi=(\pi_1,\ldots,\pi_n)\in \Pi_m^n$. Then agent $j$'s ranking will be independently (but not necessarily identically) generated from $\pi_j$ for any $j \in [n]$. The semi-random version of winner determination under $\vec\mm$ is defined as follows, which is similar to the definition in a recent paper on smoothed hardness of two-player Nash equilibrium~\citep{smoothed_focs_2020}.

\begin{definition}[\bf {\sc Semi-Random}-\ds{}]\label{dfn:smoothed-winner}
Fix a sequence of single-agent preference models $\vec\mm$. Given alternative $a \in \ma_m$, $t \in \mathbb{N}$ and a semi-random profile $P$ drawn from $\mm_m$, we are asked to decide whether the Dodgson score of $a$ is at most $t$, with probability at least $1-\frac{1}{m}$.\footnote{The algorithm is allowed to return ``Failure" with probability at most $\frac 1m$. However, when it returns YES or NO, the answer must be correct. 
Our hardness results hold even for algorithms that are only required to succeed with probability $o(1)$.
}
\end{definition}

\begin{definition}[\bf {\sc Semi-Random}-\ks{}]
Fix a sequence of single-agent preference models $\vec\mm$. Given $t \in \mathbb{N}$ and a semi-random profile $P$ drawn from $\mm_m$, we are asked to decide whether there exists an alternative whose Kemeny score of $a$ is at most $t$, with probability at least $1-\frac{1}{m}$.
\end{definition}
The definition of {\sc Semi-Random}-\ys{} is similar (See Appendix  \ref{appendix:young}). 
  
\section{Semi-Random Hardness of \ds{} and \ys{}}\label{sec:semi-hardnes} 

In many applications, such as recommender systems and information retrieval, the number of alternatives $m$ can be very large and it is inefficient for an intelligent system to learn the total ranking. In such cases, one often uses Top-$K$ ranking algorithms which only recover the top-$K$ ranking with high accuracy for $K = o(m)$~\citep{pmlr-v70-mohajer17a,chen2018nearly}. Similarly, the collected preference from an agent is more robust over her few top-ranked alternatives and can be much more noisy over the remaining alternatives. Such features are captured by 
Assumption \ref{asmpt:general-hard} below. Informally, Assumption \ref{asmpt:general-hard} states that there \emph{exists} a distribution in $\Pi_m$ that does not significantly ``perturb'' one top-$K$ ranking for $K = \Theta(m^{\frac{1}{d}})$ where $d \ge 1$. 
\begin{assumption}[\bf Top-$K$ concentration]\label{asmpt:general-hard} A series of single-agent preference models $\vec\mm$ is P-samplable, neutral, and satisfies the following condition: there exists a constant $d > 1$ such that for any sufficiently large $m$ and $K = \lceil m^{\frac{1}{d}}\rceil$, there exists $\ma'\subseteq \ma_m$, $R'\in \ml(\ma')$, and $\pi\in \Pi_m$, such that $|\ma'|=K$ and 
$$\Pr_{R\sim \pi}(\text{Top}_K(R) = R') \ge 1-\frac{1}{K}.$$ 
\end{assumption}
The following \emph{partial alternative randomization model}, in the spirit of partial bit randomization model \citep{goos_smoothed_2003,becchetti_average-case_2006}, satisfies Assumption \ref{asmpt:general-hard}. The partial bit randomization model applies to $m$-bits non-negative integer by randomly flipping its $m-K$ least significant bits while keeping its $K$ most significant bits unchanged.
\begin{example}
\label{ex:partial alternative model} The {\em partial alternative randomization model} is denoted by $\mm_m(K)$ and has parameter space $\ml(\ma_m)$. For any $R \in \ml(\ma_m)$, the distribution $\pi_R$ is obtained by uniformly at random perturbing the order of the $m-K$ least preferred alternatives in $R$ and keeping the top-$K$ ranking unchanged. For any constant $d$ and $K \ge m^{\frac{1}{d}}$, the model is P-samplable, neutral and satisfies Assumption~\ref{asmpt:general-hard}. Note that in such model, each ranking receives probability at most $\frac{1}{(m-K)!} = \frac{1}{\Omega(\exp{m})}$.
\end{example}

We show that for models that satisfying Assumption \ref{asmpt:general-hard}, winner determination under the Dodgson rule and the Young rule is hard unless NP=ZPP. Note that NP$\ne$ZPP is widely believed to hold in complexity theory. The high-level idea is to combine the existence of a top-$K$ concentration distribution guaranteed by Assumption \ref{asmpt:general-hard} and neutrality, to show that for any possible input of a NP-complete problem, the adversary is able to construct a distribution of voting profile such that efficient semi-random winner determination implies a coRP algorithm for the NP-complete problem. Thus NP$\subseteq$coRP and it implies NP$=$ZPP by the following reasoning.  Recall that RP $\subseteq$ NP. Therefore, RP $\subseteq${\sc NP}$\subseteq$ coRP, which means that RP $=$ RP $\cap$ coRP. Recall that RP $\cap$ coRP $=${\sc ZPP}. We have RP $=$ ZPP, which means that coRP $=$ coZPP. Since  coZPP $=$ ZPP, it follows that {\sc NP}$\subseteq$ coRP $=$ coZPP $=${\sc ZPP}.

\begin{theorem}[\bf Semi-random hardness of \ds{}]\label{thm:hard-ds} For any serie of single-agent preference models $\vec\mm$ that satisfies Assumption~\ref{asmpt:general-hard}, there exists no polynomial-time algorithm for {\sc Semi-Random}-\ds{} under $\vec\mm$ unless {\sc NP}$=${\sc ZPP}.
\end{theorem}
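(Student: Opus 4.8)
The plan is to prove the contrapositive: from a polynomial-time algorithm $\alg$ for {\sc Semi-Random}-\ds{} under $\vec\mm$, construct a coRP algorithm for an NP-complete problem $L$, which by the chain of inclusions stated just before the theorem forces {\sc NP}$=${\sc ZPP}. For $L$ I would take the NP-complete problem underlying the classical NP-hardness proof of \ds{} \citep{Bartholdi89:Voting}; that reduction maps an instance $I$ of size $N$ to a profile $P^\ast$ over $c=\text{poly}(N)$ alternatives, a distinguished alternative $a$, a threshold $t$, and $n=\text{poly}(N)$ voters, with $I\in L$ iff the Dodgson score of $a$ in $P^\ast$ is at most $t$. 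The core idea is that Assumption~\ref{asmpt:general-hard} together with neutrality lets the adversary ``pin down'' an arbitrary length-$K$ ranking in the top of each voter's realized ranking with high probability; so if we blow the alternative set up to $m=\text{poly}(N)$ with $K=\lceil m^{1/d}\rceil\ge c$ and $m$ also large enough that $n/K\le 1/4$ (possible with $m$ polynomial since $d$ is a constant; the finitely many small $N$ are handled by brute force), we can embed the entire worst-case election into the controllable top-$K$ part of the votes.

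Concretely, I would designate $c$ of the $m$ alternatives as \emph{active}, another $K-c$ as \emph{padding}, and the remaining $m-K$ as \emph{filler}, and for each voter $j$ let $R_j'$ be the length-$K$ ranking that lists the active alternatives exactly as $j$ ranks them in $P^\ast$, followed by the padding alternatives in a fixed order. By Assumption~\ref{asmpt:general-hard} there are $\pi\in\Pi_m$, a $K$-set $\ma'$, and $R'\in\ml(\ma')$ with $\Pr_{R\sim\pi}[\topgroup_K(R)=R']\ge 1-1/K$; by neutrality, choosing a permutation $\sigma_j$ of $\ma_m$ with $\sigma_j(R')=R_j'$ yields $\pi_j:=\sigma_j(\pi)\in\Pi_m$ with $\Pr_{R\sim\pi_j}[\topgroup_K(R)=R_j']\ge 1-1/K$, and one can sample $\pi_j$ by sampling $\pi$ and applying $\sigma_j$, so the profile $P$ drawn from $\vec\pi=(\pi_1,\dots,\pi_n)$ is a legitimate P-samplable semi-random instance. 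Let $E$ be the event that $\topgroup_K(P_j)=R_j'$ for every $j$; a union bound gives $\Pr[E]\ge 1-n/K\ge 3/4$. The structural heart of the reduction is the claim that \emph{on $E$, the Dodgson score of $a$ in $P$ equals its Dodgson score in $P^\ast$}: on $E$ every voter ranks $a$ above all padding and filler alternatives, so $a$ already beats all of them pairwise; an optimal sequence of adjacent swaps never needs to move $a$ downward; hence the minimum number of swaps making $a$ a Condorcet winner is governed entirely by $a$'s position relative to the active alternatives, which on $E$ is exactly as in $P^\ast$. I would isolate this insensitivity of the Dodgson score to the order of alternatives ranked below $a$ as a short lemma — it is the one place in the argument that needs genuine (if routine) care.

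Given $\alg$, the coRP algorithm $B$ for $L$ runs as follows: on input $I$, build $m,a,t,\vec\pi$ as above, draw $P\sim\vec\pi$, and first test whether $E$ holds, i.e.\ whether $\topgroup_K(P_j)=R_j'$ for all $j$ (polynomial time). If $E$ fails, $B$ accepts; if $E$ holds, $B$ runs $\alg$ on $(P,a,t)$ and accepts unless $\alg$ answers ``$>t$'', in which case $B$ rejects. If $I\in L$, then on $E$ the Dodgson score of $a$ in $P$ is at most $t$, so $\alg$ — being never wrong on a YES/NO answer — cannot answer ``$>t$'', and $B$ always accepts. If $I\notin L$, then whenever $B$ rejects we have $E$ and an ``$>t$'' answer from $\alg$, which on $E$ certifies that the classical Dodgson score exceeds $t$ and hence $I\notin L$, so rejections are sound; and since on $E$ without a ``Failure'' output $\alg$ must answer ``$>t$'', $\Pr[B\text{ rejects}]\ge\Pr[E]-\Pr[\alg\text{ outputs ``Failure''}]\ge 3/4-1/m\ge 1/2$. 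Thus $L\in\mathrm{coRP}$, so {\sc NP}$\subseteq$ coRP, and {\sc NP}$=${\sc ZPP} follows as in the discussion preceding the theorem.

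The main obstacle, beyond the bookkeeping, is the combination of two points. First, making the reduction ``top-$K$ local'': the padding must be arranged so that $a$'s Dodgson score depends only on the portion of each vote that Assumption~\ref{asmpt:general-hard} lets us control — this is exactly the insensitivity lemma above. Second, extracting \emph{one-sided} error (coRP, not merely BPP), which is what the final chain of inclusions needs: the device that makes this work is that $B$ can verify the good event $E$ directly from the sampled profile, so a ``$>t$'' answer from $\alg$ on a verified profile is an incontrovertible ``no'' certificate for $L$, whereas a failure of $E$ or of $\alg$ only ever makes $B$ (harmlessly) accept. I would also remark that the same construction, with the sample-and-run experiment repeated polynomially many times and $m$ chosen so that $n/K$ is dominated by the success probability, handles algorithms succeeding with probability only $o(1)$; and that Theorem~\ref{thm:hard-ys} is obtained by an entirely analogous embedding, using that the Young score of $a$ is likewise determined by the pairwise relations involving $a$ and thus by the controllable top-$K$ portion of the votes. (A minor point concerns uniformity: $B$ uses $\pi,\ma',R'$ from Assumption~\ref{asmpt:general-hard}, which depend only on $m$; as is standard in this line of work these are treated as part of the fixed family $\vec\mm$ and may be hard-wired.)
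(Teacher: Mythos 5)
Your proposal is correct and follows essentially the same route as the paper's proof: embed the Bartholdi et al.\ reduction (from \xc{}) into the top-$K$ portion of each vote using the concentration distribution from Assumption~\ref{asmpt:general-hard} together with neutrality, verify the good event directly from the sampled profile to get one-sided error, and conclude {\sc NP}$\subseteq$coRP, hence {\sc NP}$=${\sc ZPP}; your ``insensitivity'' lemma is exactly the paper's Lemma~\ref{lem:dodgson-toprobust}. The only differences are cosmetic bookkeeping (a union bound giving $\Pr[E]\ge 3/4$ versus the paper's product bound giving $1/2$, and checking top-$K$ rather than top-$m_1$).
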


\begin{proof}
\notshow{
The proof is based on the following observation of the Dodgson rule. We introduce one more notation here. For any profile $P \in \ml(\ma_m)^n$, we denote $\applast(P,m')$ the set of profiles obtained from $P$ by appending $m'$ extra alternatives to the bottom of each agent's preferences in any order. It follows that $|\applast(P,m')| = (m'!)^n$. 
\begin{lemma}\label{lem:dodgson-toprobust}
For any profile $P_1 \in \ml(\ma_m)^n$, any integer $m'\ge 1$ and profile $P_2 \in \applast(P_1,m')$, the following holds for any alternative $a \in \ma_{m}$:
\begin{itemize}
    \item If $a$ is Dodgson winner in $P_1$, then $a$ is also Dodgson winner in $P_2$.
    \item The Dodgson score of $a$ in $P_1$ is equal to that in $P_2$.
\end{itemize}
\end{lemma}
The proof of Lemma \ref{lem:dodgson-toprobust} follows by definition and can be found in Appendix \ref{appendix:ds-toprobust}. Informally, Lemma \ref{lem:dodgson-toprobust} states that by appending alternatives at the bottom of each agent's preference order, the winner and score under the Dodgson rule are robust. 
}
\noindent{\bf Overview of the proof.} We leverage the reduction in \citep{Bartholdi89:Voting} that reduces the NP-complete problem {\sc Exact Cover by 3-Sets} (\xc{}) to \ds{}. An instance of $\xc$ is denoted by $(U,S)$ including a $q$-element set $U$ such that $q$ is divisible by 3 and a collection $S$ of 3-element subsets of $U$. We are asked to decide whether $S$ contains an exact cover for $U$, i.e., a subcollection $S'$ of $S$ such that every element of $U$ occurs in exactly one member of $S'$. 

Suppose that {\sc Semi-Random}-\ds{} has a polynomial-time algorithm, denoted as \alg{}. We will use \alg{} to construct a coRP algorithm for \xc{}. 
Formally, the proof proceeds in two steps. For any instance of \xc{}, in Step 1, we follow the original reduction to construct a profile $P_1$. Then we construct a parameter profile $P^{\Theta}$ using the semi-random model $\vec\mm$ based on $P_1$. Note that a parameter profile corresponds to distribution over profiles. In Step 2, we show that \alg{} can be leveraged to Algorithm \ref{alg:rpalg} to prove that \xc{} is in coRP, which implies {\sc NP}$=$ ZPP as shown above.

Let $(U,S)$ be any instance of \xc{} such that $U = \{u_1,u_2,\cdots,u_q\}$ and $S = \{S_1,S_2,\cdots,S_s\}$ is a collection of $s$ distinct 3-element subsets of $U$. We assume without loss of generality that $q/3\le s\le q^3/6$ because $(U,S)$ must be a NO instance if $s < q/3$ and there are at most $\binom{q}{3} \le q^3/6$ distinct 3-element subsets of $U$. 
\paragraph{\bf Step 1. Construct profile $P_1$ and parameter profile $P^\Theta$.}  We first use the reduction by Bartholdi et al.~\cite{Bartholdi89:Voting} to construct a voting profile $P_1 \in \ml(\ma_{m_1})^n$ of polynomial-size in $q$. The proof of Lemma~\ref{lem:dodgson-reduction} can be found in Appendix~\ref{appendix:ds-reduction}.
\begin{lemma}\label{lem:dodgson-reduction}
We can construct a profile $P_1 \in \ml(\ma_{m_1})^n$ with $m_1 = 2q+s+1 = \mathcal{O}(q^3)$, $n \le 2(q+1)s+1 = \mathcal{O}(q^4)$, and an alternative $c$ such that $(P_1,c,\frac{4q}{3})$ is a YES instance of \ds{} if and only if $(U,S)$ is a YES instance of \xc{}. The construction can be done in polynomial time in $q$.
\end{lemma}
\notshow{
\begin{proof}
We first present the construction of profile $P_1$ and then we show why it satisfies the desired properties.
\begin{description}
    \item[Construction of $P_1$.] We first construct the set of alternatives $\ma_{m_1}$, which contains three type of alternatives. $\ma_{m_1}$ contains a \emph{critical alternative} $c$. For any $i\in [q]$ and element $u_i \in U$, $\ma_{m_1}$ contains two \emph{element alternatives} $a_i$ and $b_i$. We denote by $E = \{a_i,b_i| i\in \{1,\cdots,q\}\}$ the set of element alternatives. For any $j \in [s]$ and subset $S_j \in S$, $\ma_{m_1}$ contains one \emph{subset alternative} $s_j$. We denote by $H$ the set of subset alternatives.
    
    Now we construct the ranking profile $P_1$, which consists of the following three sub-profiles.
    \begin{description}
        \item[1. Swing Rankings] We create $s$ rankings for each member of $S$. For each subset $S_j = \{u_{j_1},u_{j_2},u_{j_3}\}$, denote $E_j = \{a_{j_1},a_{j_2},a_{j_3}\}$. Let $R_{S_j}$ be any ranking of the form $E_j \succ s_j \succ c \succ (E/\ E_j)\cup (H/\ s_j)$ where the order of alternatives in each part can be arbitrary. 
        We set $P_{1,1}$ be the profile containing $s$ swing rankings $R_{S_j}$ for all $j \in [s]$. It is easy to see $|P_{1,1}| = s$.

        The idea behind swing rankings $P_{1,1}$ is the following. Note that switching the special alternative $c$ up 1 position in $R_{S_j}$ gains 0 vote for $c$ against all element alternatives $E$; switching 2 times gains 1 vote; switching 3 times get 2 votes; switching 4 times get 3 votes. Thus, among the swing rankings $P_{1,1}$, any additional votes for $c$ over element alternatives in $E$ require $4/3$ switches per vote on the average. Moreover, to achieve $4/3$ switches per vote, $c$ must be switched 4 times to the very top in each switched ranking. 

        \item[2. Equalizing Rankings.] For $i \in [q]$, let $N_i = |\{S_j \in S | u_i \in S_j\}|$ be the number of subsets in $S$ that contains $u_i$. Let $N^* = \max\{N_1,N_2,\cdots,N_q\}$. Let $R_{u_i}$ be any linear order of the form $a_i \succ b_i \succ c \succ (E/\ \{a_i,b_i\}) \cup H$, where the order of the alternatives after $c$ can be arbitrary. We set $P_{1,2}$ to be the profile containing $N^{*} - N_i$ copies of $R_{u_i}$ for all $i \in [q]$. We also have $|P_{1,2}| \le \sum_{i \in [q]} N^*-N_i \le q N^* \le qs$.

        By adding equalizing rankings in $P_{1,2}$ to $P_{1,1}$, each alternative $a_i$ gets equal score in the pairwise competitions against $c$. Note that among equalizing rankings $P_{1,2}$, additional votes for $c$ over an element alternative $a_i$ require at least 2 switches per vote on the average.

        \item[3. Incremental Rankings.] Let $R_I$ be any ranking of the form $a_1 \succ \cdots \succ a_q \succ b_1 \succ \cdots \succ b_q \succ c \succ H $ where the order of the alternatives after $c$ can be arbitrary. We set $P_{1,3}$ the profile containing $N_I$ copies of $R_I$ such that $a_i$ will defeat $c$ by exactly 1 voter in $P_{1,1} \cup P_{1,2} \cup P_{1,3}$ for any $i \in [q]$. It is easy to see that $|P_{1,3}| \le |P_{1,1}| + |P_{1,2}| + 1 \le s+qs+1$.

        By adding incremental rankings $P_{1,3}$ to $P_{1,1}\cup P_{1,2}$, alternative $a_i$ defeats $c$ by exactly 1 vote for all $i \in [q]$. Besides, additional votes for $c$ over an element alternative $a_i$ among incremental rankings $P_{1,3}$ require at least 2 switches per vote in average. 
    \end{description}
    We set $P_1 = P_{1,1} \cup P_{1,2} \cup P_{1,3}$. According to the construction, we know that $|P_1| \le \sum_{i=1}^3 |P_{1,i}| \le 2(q+1)s+1 = \mathcal{O}(q^4)$. 
    \item[Reduction.] Recall that each element alternative $a_i$ wins exactly 1 vote against the critical alternative $c$ in $P_1$. Thus in order to make $c$ the Condorcet winner, $c$ must win against each $a_i$. However, this requires at least $4q/3$ switches and is achievable only if i) all switches are among swing rankings, and ii) each switched swing ranking move $c$ to the top of the preference by 4 switches. It is obvious from the construction of swing rankings that any collection of swing rankings that can elect $c$ by no more than $4q/3$ switches correspond to an exact 3-cover of $(U,S)$. Thus $(P_1,c,\frac{4q}{3})$ is a YES instance of \ds{} if and only if $(U,S)$ is a YES instance of \xc{}.
\end{description}
It is clear that the construction can be done in polynomial time in $q$. This completes the proof.
\end{proof}
}
The following observation of the Dodgson rule is crucial for the proof. We introduce one more notation here. For any profile $P \in \ml(\ma_m)^n$, we denote $\applast(P,m')$ the set of profiles obtained from $P$ by appending $m'$ extra alternatives to the bottom of each agent's preferences in any order. It follows that $|\applast(P,m')| = (m'!)^n$. 
\begin{lemma}\label{lem:dodgson-toprobust}
For any profile $P_1 \in \ml(\ma_m)^n$, any integer $m'\ge 1$ and profile $P_2 \in \applast(P_1,m')$, the following holds for any alternative $a \in \ma_{m}$:
\begin{itemize}
    \item If $a$ is Dodgson winner in $P_1$, then $a$ is also Dodgson winner in $P_2$.
    \item The Dodgson score of $a$ in $P_1$ is equal to that in $P_2$.
\end{itemize}
\end{lemma}
The proof of Lemma \ref{lem:dodgson-toprobust} follows by definition and can be found in Appendix \ref{appendix:ds-toprobust}. 
Informally, Lemma \ref{lem:dodgson-toprobust} states that by appending alternatives at the bottom of each agent's preference order, the winner and score under the Dodgson rule are robust. 

Let $m = (2m_1n)^d = \polynomial(q)$, where $d$ is the constant defined in Assumption \ref{asmpt:general-hard}. We create a set of $m-m_1$ dummy alternatives called $D$. The total alternative set is set as $\ma_m = \ma_{m_1}\cup D$. Denote $P_1 = (R^1_i)_{i\in[n]}$. We define $P:= \applast(P_1,m-m_1) = (R_i)_{i \in [n]}$ by appending the dummy alternatives in $D$. We remark that by the definition of $\applast$, each ranking $R_i$ in $P$ is of the form 
\begin{align*}
    R_i = \ma_{m_1} \succ_{R_i} D
\end{align*}
where the order of $\ma_{m_1}$ in $R_i$ is the same as $R^1_i$.

Now we construct the parameter profile $P^\Theta$ based on $P$ such that each parameter corresponds to a preference order in $P$. According to Assumption \ref{asmpt:general-hard}, there exists $K = \lceil m^{\frac{1}{d}}\rceil \ge m_1$, $\ma'\subseteq \ma_m$, $R'\in \ml(\ma')$, and $\pi\in \Pi_m$, such that $|\ma'|=K$ and $\Pr_{R\sim \pi}(\text{Top}_K(R) = R') \ge 1-\frac{1}{K}$. 
Let \[R^*:= \ma' \succ_{R^*} (\ma_m\setminus \ma')\] where the order in $\ma'$ is the same as $R'$ and the order in $\ma_m\setminus \ma'$ is arbitrary. Denote the parameter corresponding to this specific distribution $\pi \in \Pi_m$ as $\theta$. For every $i \in [n]$, we can find a permutation $\sigma_i$ over $\ml(\ma_m)$ such that $\sigma_i(R^*) = R_i$. We then apply permutation $\sigma_i$ to the predefined distribution $\pi$ and get a new distribution $\sigma_i(\pi)$ which is also in $\Pi_m$ since $\mm_m$ is neutral,. Now we define the parameter profile $P^\Theta := (\theta_i)_{i\in[n]}$, where $\theta_i$ is the parameter corresponding to $\sigma_i(\pi)$. Since $K \ge m_1$, we have
\begin{align*}
    \Pr_{R \sim \pi_{\theta_i}}(\topgroup_{m_1}(R) = R^1_i) \ge 1-\frac{1}{K}
\end{align*}
and the construction of $P^\Theta$ can be done in polynomial time of $q$.

\noindent{\bf Step 2.~Use $\alg$ to solve \xc.} 
For a profile $P \in \ml(\ma_m)^n$, we denote $\topgroup_K(P)$ the collection of top-$K$ ranking of each preference order in $P$. We now prove that we can construct a coRP Algorithm for \xc{} based on \alg{}. 

\begin{algorithm}
\caption{Randomized Algorithm for \xc{}}
{\bf Input}: An \xc{} instance $(U,S)$ and $\alg$ for \ds{}
\begin{algorithmic}[1] 
\STATE Construct profile $P_1$ and parameter profile $P^{\Theta}$ according to Step 1.
\STATE Sample a profile $P'$ from $\vec \mm_m$ given $P^{\Theta}$.
\IF {$\topgroup_{m_1}(P') \ne P_1 $}
\STATE Return YES.
\ENDIF
\STATE Run $\alg$ on $(P',c,4q/3)$.
\IF {$\alg$ returns YES}
\STATE Return YES.
\ELSE
\STATE Return NO.
\ENDIF
\end{algorithmic}
\label{alg:rpalg}
\end{algorithm}

\begin{claim}\label{claim:reduction}
If $\topgroup_{m_1}(P') = P_1$, then $(P',c,4q/3)$ is a YES instance for \ds{} if and only if $(U,S)$ is a YES instance for \xc{}.
\end{claim}
\begin{proof}
We know that $P' \in \applast(P_1,m-m_1)$ by definition. According to Lemma \ref{lem:dodgson-toprobust}, we know that the Dodgson score of $c$ in $P'$ is the same as the Dodgson score of $c$ in $P_1$. Therefore, $(P',c,\frac{4q}{3})$ is a YES instance of \ds{} if and only if $(P_1,c,\frac{4q}{3})$ is a YES instance of \ds{}, which is also equivalent to $(U,S)$ is a YES instance by lemma \ref{lem:dodgson-reduction}. 
\end{proof}

Notice that sampling $P'$ from  $P^{\Theta}$ takes polynomial time because $\vec\mm$ is P-samplable (Assumption~\ref{asmpt:general-hard}). It follows that Algorithm~\ref{alg:rpalg} is a polynomial-time algorithm. Recall that A coRP algorithm always returns YES to YES instances, and returns NO with constant probability to NO instances.  Since Algorithm \ref{alg:rpalg} returns NO only if $\topgroup_{m_1}(P') = P_1$ and $(P',c,4q/3)$ is a NO instance, by claim \ref{claim:reduction} it is clear that if $(U,S)$ is a YES instance then Algorithm \ref{alg:rpalg} returns YES. Therefore, to prove that Algorithm~\ref{alg:rpalg} is an coRP algorithm it suffices to prove that if $(U,S)$ is a NO instance then Algorithm~\ref{alg:rpalg}  returns NO with constant probability.

\begin{claim}\label{claim:goodsample}
$\Pr\left(\topgroup_{m_1}(P') = P_1\right) \ge 1/2$.
\end{claim}
\begin{proof}
$P' = (R'_i)_{i \in [n]}$ is sampled from $P^{\Theta} = (\theta_i)_{i \in [n]}$. Recall that $m = (2m_1n)^d$ and $K = m^{\frac{1}{d}}$. Thus $K \ge m_1$ and we know that by construction in Step 1 and Assumption \ref{asmpt:general-hard} that for all $i \in [n]$,
\begin{align*}
    \Pr_{R_i' \sim \pi_{\theta_i}}\left(\topgroup_{m_1}(R_i') = R_i^1\right)\ge \Pr_{R_i' \sim \pi_{\theta_i}}\left(\topgroup_{K}(R_i') = R_i^1\right )
    \ge 1-\frac{1}{K} = 1 - \frac{1}{2m_1n}.
\end{align*}
Thus we can derive
\begin{align*}
   \Pr_{P' \sim P^\Theta}\left(\topgroup_{m_1}(P') = P_1\right) &
   \ge \prod_{i=1}^n\left(\Pr_{R_i' \sim \pi_{\theta_i}}\left(\topgroup_{K}(R_i') = R_i^1\right )\right)\\
   & \ge (1 - \frac{1}{2m_1n})^n \ge 1-\frac{1}{2m_1} \ge \frac{1}{2}. \qedhere
\end{align*}
\end{proof}
When $(U,S)$ is a NO instance of \xc{},  $\alg{}(P',c,\frac{4q}{3})$ returns NO with probability at least $1-\frac{1}{m}$ by definition \ref{dfn:smoothed-winner}. Note that Algorithm \ref{alg:rpalg} returns YES when $\topgroup_{m_1}(P') \ne P_1$ which happens with probability at most $\frac{1}{2}$ by Claim~\ref{claim:goodsample}. Therefore, for any profile $P' \in \ml(\ma_m)^n$ such that $\topgroup_{m_1}(P') = P_1\}$, $\alg(P',c,\frac{4q}{3})$ succeeds with probability at least $1-\frac{2}{m} \ge \frac{1}{3}$. According to Claim~\ref{claim:reduction} and ~\ref{claim:goodsample}, we know that Algorithm~\ref{alg:rpalg} returns NO for any NO instance with probability at least $\frac{1}{2}\times \frac{1}{3} = \frac{1}{6}$. This completes the proof.
\end{proof}
We prove a similar result for \ys{} and the proof can be found in Appendix \ref{appendix:young}.
\begin{theorem}[\bf Semi-random hardness of \ys{}]\label{thm:hard-ys} For any single-agent preference model $\vec\mm$ that satisfies Assumption~\ref{asmpt:general-hard}, there exists no polynomial-time algorithm for {\sc Semi-Random}-\ys{} under $\vec\mm$ unless {\sc NP}$=${\sc ZPP}.
\end{theorem}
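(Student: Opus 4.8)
\emph{Plan.} The argument is a verbatim adaptation of the proof of Theorem~\ref{thm:hard-ds}: the only Dodgson-specific pieces there are the reduction of \cite{Bartholdi89:Voting} to \ds{} and the bottom-padding robustness Lemma~\ref{lem:dodgson-toprobust}, and both have direct Young counterparts. Assuming \alg{} is a polynomial-time algorithm for {\sc Semi-Random}-\ys{} under $\vec\mm$, we construct a coRP algorithm for an NP-complete problem, yielding {\sc NP}$\subseteq$coRP, hence {\sc NP}$=${\sc ZPP} by the chain of inclusions recalled just before Theorem~\ref{thm:hard-ds}.

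\emph{Step 1: reduction, robustness, and padding.} From an instance of the NP-complete problem used by \cite{Bartholdi89:Voting} to prove \ys{} NP-hard (with size parameter $q$), that reduction produces in time $\polynomial(q)$ a profile $P_1\in\ml(\ma_{m_1})^n$ with $m_1,n=\polynomial(q)$, a distinguished alternative $c$, and a threshold $t$ such that $(P_1,c,t)$ is a YES instance of \ys{} iff the original instance is. The structural ingredient I need is the Young analogue of Lemma~\ref{lem:dodgson-toprobust}: for every $P_1\in\ml(\ma_m)^n$, every $m'\ge 1$, and every $P_2\in\applast(P_1,m')$, each alternative $a\in\ma_m$ has the same Young score in $P_1$ and in $P_2$. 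This follows because appending $m'$ alternatives to the bottom of every ranking induces the identity bijection on sub-profiles, and in any nonempty sub-profile $a$ is ranked above each appended alternative by every agent and therefore defeats it by a strict majority, while pairwise comparisons within $\ma_m$ are unchanged; hence $a$ is a strict Condorcet winner of a sub-profile of $P_2$ exactly when it is one of the corresponding sub-profile of $P_1$, so the largest such sub-profiles have equal size. With this in hand, I copy Step~1 of the Dodgson proof verbatim: put $m=(2m_1n)^d$, add $m-m_1$ dummy alternatives, set $P:=\applast(P_1,m-m_1)$, and use Assumption~\ref{asmpt:general-hard} together with neutrality to build a parameter profile $P^\Theta=(\theta_i)_{i\in[n]}$ with $\Pr_{R\sim\pi_{\theta_i}}(\topgroup_{m_1}(R)=R^1_i)\ge 1-\frac{1}{K}\ge 1-\frac{1}{2m_1n}$ for all $i\in[n]$, constructible in time $\polynomial(q)$.

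\emph{Step 2: the coRP algorithm and the main obstacle.} Run the exact analogue of Algorithm~\ref{alg:rpalg}, calling \alg{} on $(P',c,t)$: sample $P'$ from $P^\Theta$; if $\topgroup_{m_1}(P')\ne P_1$, output YES; otherwise output the verdict of \alg{} on $(P',c,t)$. The analogue of Claim~\ref{claim:reduction} holds by the Young robustness fact (since $\topgroup_{m_1}(P')=P_1$ forces $P'\in\applast(P_1,m-m_1)$, so $c$ has the same Young score in $P'$ as in $P_1$) combined with the reduction; the analogue of Claim~\ref{claim:goodsample}, $\Pr(\topgroup_{m_1}(P')=P_1)\ge(1-\frac{1}{2m_1n})^n\ge\frac{1}{2}$, is unchanged. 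The remaining correctness and runtime bookkeeping---YES instances always produce YES, NO instances produce NO with probability at least $\frac{1}{2}\cdot(1-\frac{2}{m})\ge\frac{1}{6}$, and everything runs in polynomial time by P-samplability---is word-for-word the Dodgson argument. I expect the Young version of the bottom-padding robustness claim and a careful check of the threshold direction (\ys{} asks whether the Young score is \emph{at least} $t$, unlike \ds{}) in the \cite{Bartholdi89:Voting} reduction to be the only points needing attention; the probabilistic part of the reduction carries over with no changes.
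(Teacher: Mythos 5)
Your proposal is correct and takes essentially the same route as the paper: a Young analogue of Lemma~\ref{lem:dodgson-toprobust} (the paper's Lemma~\ref{lem:young-toprobust}, proved by the same observation that the Young score of $a$ ignores everything ranked below $a$) combined with the padding, parameter-profile construction, and coRP bookkeeping of Theorem~\ref{thm:hard-ds} applied verbatim. The only, immaterial, difference is the base reduction in Step~1: the paper uses the \xc{} reduction of \citep{Caragiannis09:Approximability} (Lemma~\ref{lem:young-reduction}) whereas you invoke a Bartholdi-style reduction, but any polynomial-time NP-hardness reduction for \ys{} serves equally well, and your attention to the ``at least $t$'' threshold direction is handled correctly since the padding preserves the Young score exactly.
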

\noindent{\em Proof sketch. }{\rm 
We first extend Lemma \ref{lem:dodgson-toprobust} for the Dodgson rule to the Young rule. With that in hand, the proof is then very similar to that of Theorem \ref{thm:hard-ds}. The main difference is now that we use the reduction in \citep{Caragiannis09:Approximability} to construct the profile in Step 1 and then a coRP algorithm for the NP-complete problem \xc{}, which leads to NP $=$ ZPP. \qed}

\subsection{Extension to Multi-Winner Voting Rules}
A multi-winner voting rule selects a winning \emph{$k$-committee}, which is a $k$-size subset of alternatives. We consider the \cc{} (CC) rule and the Monroe rule that assign each $k$-committee a score and choose the $k$-committee with the highest (respectively, lowest) score as the winner. Definitions of the two voting rules and their corresponding winner determination problems and the proof of the following theorem can be found in Appendix \ref{appendix:cc-monroe}.
We remark that winner determination under the CC rule and the Monroe rule are both NP-hard \citep{Procaccia2008:On-the-complexity,Lu2011BudgetedSC}.

\begin{theorem}[\bf Semi-random hardness of CC and Monroe]\label{thm:hard-cc-monroe} For any single-agent preference model $\vec\mm$ that satisfies Assumption~\ref{asmpt:general-hard}, there exists no polynomial-time algorithm for the semi-random version of the winner determination problems of the CC rule and the Monroe rule under $\vec\mm$ unless {\sc NP}$=${\sc ZPP}.
\end{theorem}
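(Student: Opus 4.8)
\noindent\textbf{Proof proposal for Theorem~\ref{thm:hard-cc-monroe}.}
The plan is to mimic the proof of Theorem~\ref{thm:hard-ds} almost line for line, substituting for the Bartholdi et al.\ reduction and for Lemma~\ref{lem:dodgson-toprobust} their counterparts for the CC rule and the Monroe rule. Step~1 would start from the known NP-hardness reductions of \citep{Procaccia2008:On-the-complexity,Lu2011BudgetedSC}: for an appropriate NP-complete problem $X$ (a covering-type problem such as \xc{} or \domset{}), these reductions build in polynomial time a profile $P_1\in\ml(\ma_{m_1})^n$ with $m_1,n=\polynomial(|X|)$, a committee size $k$, and a threshold $t$, such that the CC (resp.\ Monroe) winner determination instance associated with $(P_1,k,t)$ is a YES instance if and only if $X$ is. If the decision variant asks whether a distinguished alternative lies in some winning committee, one simply carries along the alternative supplied by the reduction; nothing below is affected.

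The one new ingredient is an ``append-last'' robustness lemma playing the role of Lemma~\ref{lem:dodgson-toprobust}: for any $P_1\in\ml(\ma_{m_1})^n$, any $m'\ge 1$, and any $P_2\in\applast(P_1,m')$, every committee $W$ consisting only of alternatives of the original instance has the same CC-score (resp.\ Monroe-score) in $P_1$ and in $P_2$, and no committee containing an appended dummy can be a witness for the winner determination instance. The first part is immediate, since appending alternatives only at the bottom leaves the ordinal position of every original alternative in every vote unchanged; hence each voter's misrepresentation $\mathrm{misrep}_i(a)$ (the Borda position, or the approval bit) is unchanged for original $a$, and so is $\sum_i\min_{a\in W}\mathrm{misrep}_i(a)$ as well as, for Monroe, the cost of every balanced voter-to-representative assignment. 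For the second part, every dummy sits strictly below position $m_1$ in every vote, so assigning any voter to a dummy costs at least $m_1$; as long as $t$ is below the corresponding crude bound ($n\cdot m_1$ for CC, $\lfloor n/k\rfloor\cdot m_1$ for Monroe — which holds in the cited reductions, is automatic for the approval-based version where dummies contribute zero satisfaction, and can always be enforced by mild rescaling or padding), replacing any dummy in a committee by an arbitrary original alternative can only help, so the optimum is attained inside $\ma_{m_1}$.

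With this lemma the remainder is a transcription of Steps~1 and~2 of Theorem~\ref{thm:hard-ds}. Set $m=(2m_1n)^d$ with $d$ the constant from Assumption~\ref{asmpt:general-hard}, adjoin $m-m_1$ dummy alternatives $D$, let $P:=\applast(P_1,m-m_1)$, and use Assumption~\ref{asmpt:general-hard} together with neutrality to construct a parameter profile $P^\Theta=(\theta_i)_{i\in[n]}$ with $\Pr_{R\sim\pi_{\theta_i}}(\topgroup_{m_1}(R)=R_i^1)\ge 1-\tfrac1K\ge 1-\tfrac1{2m_1n}$ for every $i$, exactly as there. The coRP algorithm for $X$ samples $P'\sim P^\Theta$; if $\topgroup_{m_1}(P')\ne P_1$ it outputs YES, otherwise it runs the assumed polynomial-time algorithm \alg{} for the semi-random CC (resp.\ Monroe) winner determination on the instance built from $P'$ and echoes its answer. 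Conditioned on $\topgroup_{m_1}(P')=P_1$ the sampled instance is equivalent to $X$ by the append-last lemma (the analogue of Claim~\ref{claim:reduction}), a union bound gives $\Pr[\topgroup_{m_1}(P')=P_1]\ge\tfrac12$ (the analogue of Claim~\ref{claim:goodsample}), and combining with the $1-\tfrac1m$ success guarantee of \alg{} shows this is a coRP algorithm for $X$. Hence $\text{{\sc NP}}\subseteq\text{coRP}$, which yields {\sc NP}$=${\sc ZPP} by the chain of inclusions spelled out before Theorem~\ref{thm:hard-ds}.

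The main obstacle, relative to the Dodgson/Young case, is precisely the append-last lemma: Dodgson and Young scores depend on $P_1$ only through pairwise majority margins among the original alternatives, which bottom-padding trivially preserves, whereas CC and Monroe scores depend on full ordinal positions. The resolution above is conceptually routine but does require (i) checking that the thresholds produced by the reductions of \citep{Procaccia2008:On-the-complexity,Lu2011BudgetedSC} are small compared to $m_1$, or adjusting those reductions slightly so that they are, and (ii) for Monroe, observing that the balanced-assignment constraint does not let a dummy ``hide'', since such a dummy must then represent $\Theta(n/k)$ voters at misrepresentation $\ge m_1$ each. Everything else transfers mechanically from the proof of Theorem~\ref{thm:hard-ds}.
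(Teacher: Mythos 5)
Your proposal is correct and follows essentially the same route as the paper: the paper also transplants Steps 1--2 of Theorem~\ref{thm:hard-ds}, using the reduction of Procaccia et al.\ for Step 1 (Lemma~\ref{lem:cc-monroe-reduction}) and an append-last robustness lemma (Lemma~\ref{lem:cc-monroe-toprobust}) showing that committees of original alternatives keep their CC/Monroe scores while appended dummies, being strictly less preferred by every voter, never need to appear in an optimal committee. Your misrepresentation-based phrasing and the threshold hedging are just a reformulation of the paper's convention that the DPSF family satisfies $\alpha^{m+m'}(i)=\alpha^{m}(i)$, and the dummy-replacement argument you already give renders the threshold condition unnecessary.
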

\noindent{\em Proof sketch. }{\rm 
We first prove counter parts of Lemma \ref{lem:dodgson-toprobust} for the CC rule and the Monroe rule. Then the proof follows the same idea in the proof of Theorem \ref{thm:hard-ds}, except that we use different reductions to construct the  profile in Step 1.  \qed}

\section{\ks{} v.s. \ds{}}\label{sec:ks-ds}
In this section, we present to two results regarding the Kemeny and Dodgson rule under the $\vec{\alpha}$-IC model. In Theorem \ref{thm:hard-ks}, we show that {\sc Semi-Random}-\ks{} has no polynomial time algorithm under $(1-\frac{1}{m})$-IC unless NP = ZPP. In contrast, we provide an efficient algorithm for {\sc Semi-Random}-\ds{} under $(1-\frac{1}{m})$-IC when $n = \Omega(m^2\log^2 m)$ (Theorem \ref{thm:dodgson-easy-alphaIC}). The two results together provide an interesting separation of the semi-random complexity of winner determination under different NP-hard rules. 
\subsection{Semi-Random Hardness of \ks{}}
\ks{} is NP-complete and is easier than computing the Kemeny ranking, which is $\Theta_2^P$-complete \citep{Hemaspaandra05:Complexity}.
Thus hardness result for computing the Kemeny ranking \citep{Xia2021:The-Smoothed} does not imply the semi-random hardness of \ks{}.
Nevertheless, under the same assumption made in \citep{Xia2021:The-Smoothed}, we can prove the semi-random hardness of \ks{}. To better illustrate the separation of semi-random complexity between Kemeny and Dodgson, we state the result in a special case under the $\vec\alpha$-IC model first. The formal statement of the general assumption and theorem as well as its proof are defered to Section \ref{sec:proof of ks}.
\begin{theorem}\label{thm:hard-ks}
For any constant $d \ge 0$ and $\vec\alpha = (\alpha_m)_{m\ge 3}$ such that $\alpha_m \in [0,1-\frac{1}{m^d}]$ for any sufficiently large $m$, there exists no polynomial-time algorithm for {\sc Semi-Random}-\ks{} under $\vec\alpha$-IC unless {\sc NP} $=$ {\sc ZPP}.
\end{theorem}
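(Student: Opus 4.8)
The plan is to follow the two-step template of the proof of Theorem~\ref{thm:hard-ds}: reduce an NP-complete problem to \ks{}, turn the resulting profile into a parameter profile for $\vec\alpha$-IC, and use a hypothetical polynomial-time algorithm $\alg$ for {\sc Semi-Random}-\ks{} to build a coRP algorithm, whence {\sc NP}$\subseteq$coRP$=${\sc ZPP} exactly as in the discussion preceding Theorem~\ref{thm:hard-ds} (recall \ks{} is NP-complete). The key difference from the \ds{} argument is that under $\vec\alpha$-IC with $1-\alpha_m\ge m^{-d}$ one cannot preserve a whole profile with non-negligible probability — the chance that all $n$ agents realize their center rankings is about $(1-\alpha_m)^n$, exponentially small — so the ``bottom-padding is exactly robust'' idea behind Lemma~\ref{lem:dodgson-toprobust} has no counterpart. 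Instead I would use a ``repeat-and-concentrate'' approach in the spirit of the Kemeny-\emph{ranking} hardness of \cite{Xia2021:The-Smoothed}: have the adversary repeat the reduction's base profile polynomially many times so that its pairwise margins dominate the $\Theta(\sqrt{n\log m})$ fluctuation coming from the $\alpha_m$-fraction of ``noised'' agents.

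In detail: first I invoke NP-hardness of \ks{} to obtain, from an instance of an NP-complete problem, a base profile $Q\in\ml(\ma_{m_1})^{q_0}$ and an integer $t_0$ with $K_Q:=\min_a\mathrm{KS}_Q(a)\le t_0$ iff the instance is YES (here $\mathrm{KS}_Q(a)$ denotes the Kemeny score of $a$ in $Q$); by padding the source problem I may assume $m:=m_1$ exceeds the constant beyond which $\alpha_m\le 1-m^{-d}$. I then take $n$ to be the least multiple of $q_0$ exceeding $C\,q_0^2m^{2d+4}\log m$ for a suitable constant $C$ — polynomial, since $1-\alpha_m\ge m^{-d}$ — and let the adversary's parameter profile $P^\Theta\in\Pi_m^n$ have, as its $n$ centers, $n/q_0$ copies of $Q$. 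The analysis hinges on the elementary identity (obtained by expressing $\kt$ through pairwise margins): for any realized $P'\in\ml(\ma_m)^n$ and any $R\in\ml(\ma_m)$,
\[
\kt(P',R)=\tfrac{\alpha_m n}{2}\binom m2+\tfrac{(1-\alpha_m)n}{q_0}\kt(Q,R)-\tfrac12\sum_{a\prec_R b}F_{ba},\qquad F_{ba}:=\mathrm{margin}_{P'}(b,a)-(1-\alpha_m)\tfrac{n}{q_0}\mathrm{margin}_Q(b,a).
\]
Each agent realizes its center w.p.\ $1-\alpha_m$ and is uniform otherwise, so $\mathbb E[F_{ba}]=0$, and Hoeffding plus a union bound over ordered pairs give $\Pr[\exists(b,a):|F_{ba}|>C'\sqrt{n\log m}]\le m^{-10}$ for a suitable constant $C'$. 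On the efficiently checkable event that $|F_{ba}|\le C'\sqrt{n\log m}$ for all $(b,a)$, the last sum is bounded by $\Delta:=\tfrac12\binom m2 C'\sqrt{n\log m}$ in absolute value \emph{simultaneously for every} $R$, so $\bigl|\min_a\mathrm{KS}_{P'}(a)-\tfrac{\alpha_m n}{2}\binom m2-\tfrac{(1-\alpha_m)n}{q_0}K_Q\bigr|\le\Delta$, and the choice of $n$ ensures $\tfrac{(1-\alpha_m)n}{q_0}\ge 2\Delta+2$, separating the YES and NO cases. The coRP algorithm then samples $P'$ from $P^\Theta$; if the margin check fails it outputs YES; otherwise it runs $\alg$ on $(P',t^\ast)$ with $t^\ast:=\lfloor\tfrac{\alpha_m n}{2}\binom m2+\tfrac{(1-\alpha_m)n}{q_0}t_0+\Delta\rfloor$ and returns its answer, reading a ``Failure'' of $\alg$ as YES. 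On a YES instance, whenever the check passes $\min_a\mathrm{KS}_{P'}(a)\le t^\ast$, so $\alg$ cannot answer NO (it never returns an incorrect YES/NO, cf.\ the footnote to Definition~\ref{dfn:smoothed-winner}) and the algorithm always outputs YES; on a NO instance the check passes and $\alg$ answers NO, except with probability at most $m^{-10}+\tfrac1m<\tfrac12$.

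I expect the main obstacle to be making the Kemeny \emph{score} (rather than just the winning alternative's ordering) robust to the strong $\alpha_m$-noise while keeping $n$ polynomial; this is what forces the quantitative bookkeeping behind the identity above together with the uniform-over-all-$R$ concentration bound, and it is precisely where $\alpha_m\le 1-m^{-d}$ enters. A secondary subtlety is one-sidedness: converting what would naively be a BPP reduction into a coRP one by pinning down the approximation error deterministically on a checkable event, so that YES instances are never rejected — this is what lets us conclude {\sc NP}$=${\sc ZPP} rather than merely {\sc NP}$\subseteq$BPP. Finally, for the general statement promised in Section~\ref{sec:proof of ks} — where $\vec\alpha$-IC is replaced by the assumption imported from \cite{Xia2021:The-Smoothed} — the same scheme carries over once the adversary can choose, from the distribution family, centers whose expected pairwise margins are bounded and controllable, which is exactly what that assumption provides.
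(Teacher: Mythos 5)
Your proof of the stated ($\vec\alpha$-IC) theorem is correct, but it takes a genuinely different route from the paper. The paper does not reduce from \ks{} itself: it proves a more general result (Theorem~\ref{thn:hard-ks-general}) under the weaker Assumption~\ref{asmpt:kemeny} (existence of a single distribution with a $3$-cycle of weight $\Omega(1/m^k)$ in its WMG), reducing from {\sc Eulerian Feedback Arc Set} and importing from \cite{Xia2021:The-Smoothed} a construction of a fractional parameter profile whose expected WMG equals the input graph $G$; it then scales this up to $\Theta(m^K)$ agents, checks concentration of $\wmg(P')$ around $G_n$ in $L_1$, and uses the backward-arc formula (Claim~\ref{claim:kt distance}) to shift the threshold, with the same one-sided ``check fails $\Rightarrow$ output YES'' trick you use. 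Theorem~\ref{thm:hard-ks} is then a corollary because $\alpha_m$-IC satisfies Assumption~\ref{asmpt:kemeny}. Your argument instead exploits the full strength of $\alpha$-IC (the adversary may choose an \emph{arbitrary} center for every agent), which lets you replicate a hard \ks{} profile $Q$ directly and run a pairwise-margin Hoeffding bound; this is more self-contained (no EFAS, no imported WMG-realization construction) and the bookkeeping — the identity for $\kt(P',R)$, the uniform-over-$R$ error $\Delta$, the choice $n=\Theta(q_0^2m^{2d+4}\log m)$, and the union-bound treatment of $\alg$'s failure probability — is sound. What it buys less of is generality: your closing remark that the scheme ``carries over'' to Assumption~\ref{asmpt:kemeny} is too quick, since that assumption does not let the adversary plant arbitrary centers or control all pairwise margins — it only guarantees one distribution with a weak $3$-cycle plus neutrality, and recovering an arbitrary target WMG from such building blocks is exactly the nontrivial construction the paper borrows from \cite{Xia2021:The-Smoothed}. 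For the theorem as stated, though, your proof stands on its own.
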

\notshow{
\noindent{\em Proof sketch. }{\rm 
Assume {\sc Semi-Random}-\ks{} has polynomial time algorithm, we give a coRP algorithm for the NP-complete problem {\sc Feedback Arc Set} on Eulerian graphs (EFAS) \citep{Perrot2015:Feedback}. We use techniques developed in \citep{Xia2021:The-Smoothed} in construction of the parameter profile, but then we need a more involved analysis to give upper and lower bounds of the optimal Kemeny score in a semi-random profile to argue the algorithm is really a coRP algorithm.  \qed}}
Note that for $d \ge 0$, $(1-\frac{1}{m^d})$-IC is close to the average case in the sense that any distribution in $\Pi_m$ is only $O(\frac{1}{m^d})$ away from the uniform distribution in total variation distance. Therefore, Theorem \ref{thm:hard-ks} shows that \ks{} remains hard even for models that are close to the average case. 
\subsection{Semi-Random Easiness of \ds{}}
In contrast to the Kemeny rule, we prove that winner determination under the Dodgson rule is tractable under models close to the average case, i.e., $(1-\frac{1}{m})$-IC. We remark here that although $(1-\frac{1}{m})$-IC is close to IC, $(1-\frac{1}{m})$-IC may concentrate on a single ranking with probability as large as $\Theta(\frac{1}{m})$, while every ranking in IC has probability exactly $\frac{1}{m!}=o(\frac{1}{\exp(m)})$. 

Since 1-IC is equivalent to IC, the following theorem that works for any $\alpha \in [1-\frac{1}{m},1]$ thus generalizes previous results that only work for IC~\citep{McCabe-Dansted2008:Approximability,Homan2009:Guarantees}. 
\begin{theorem}[\bf Semi-random easiness of \ds{}]\label{thm:dodgson-easy-alphaIC}
For any $\vec\alpha = (\alpha_m)_{m\ge 3}$ such that $\alpha_m \in [1-\frac{1}{m},1]$ for sufficiently large $m$, there exists a polynomial-time algorithm for {\sc Semi-Random}-\ds{} under $\vec\alpha$-IC that succeeds with probability at least $1 - 2(m-1)\exp\left(-\frac{n}{72m^2} \right)$.
\end{theorem}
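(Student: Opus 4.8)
The plan is to exploit the fact that under $\alpha$-IC with $\alpha$ close to $1$, the semi-random profile is essentially a union of i.i.d.\ uniform rankings plus a small "spike" of at most $n$ copies of adversarially chosen rankings, and that with high probability the profile has a Condorcet winner whose Dodgson score is $0$. First I would recall that for $\alpha_m \ge 1-\frac1m$, each agent independently draws the uniform ranking with probability at least $1-\frac1m$; so in expectation all but $O(1)$ of the $n$ agents contribute a uniform sample. More importantly, I would bound, for any ordered pair of alternatives $(a,b)$, the net margin $\text{net}_P(a,b)$ (number preferring $a$ to $b$ minus number preferring $b$ to $a$) as a sum of independent bounded random variables: conditioned on whether agent $j$ "fires" the spike ranking $\pi_j$ or draws uniformly, the contribution is in $[-1,1]$ and in the uniform case has mean $0$. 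A Hoeffding/Azuma bound then gives that, except with probability $2\binom{m}{2}\exp(-\Omega(n/m^2))$, every pairwise margin lies within $O(\sqrt{n\log m})$ of its "deterministic part" coming only from the spiked agents — actually for the algorithm I want the stronger statement that the margin structure is close to what a near-uniform profile gives.

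The key structural step is to design the polynomial-time algorithm. The natural candidate: compute the weighted majority graph of the sampled profile $P$, and (i) if some alternative $a$ beats every other alternative by a margin exceeding some threshold $\tau$ (to be chosen around $\Theta(n/m)$ or so), declare its Dodgson score to be $0$ and answer the query for $a$ directly, handling other alternatives by noting their Dodgson score is positive; (ii) otherwise, return "Failure." To make this correct I would prove a \emph{gap lemma}: with probability at least $1-2(m-1)\exp(-n/(72m^2))$, the sampled profile $P$ has a Condorcet winner (so its Dodgson score is $0$), every non-Condorcet-winner has Dodgson score at least $1$, and moreover all pairwise margins are bounded away from $0$ by enough that the algorithm can \emph{certify} these facts from $P$ alone. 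The point is that a profile drawn from $\alpha$-IC is, with the stated probability, "generic" enough that its Dodgson winner is exactly its Condorcet winner and this is efficiently verifiable; since the only query the problem asks is "is the Dodgson score of $a$ at most $t$", and we can compute the true score ($0$ if $a$ is the Condorcet winner, and otherwise we separately argue it is large or just recompute via the certified margins), we answer correctly whenever the good event holds.

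Concretely, the union bound yielding the constant $72$ and the exponent $-n/(72m^2)$ should come from the following: for a fixed alternative $a$, consider the event that $a$ is the Condorcet winner. Under pure IC, $\text{net}_P(a,b)$ for a fixed $b$ is a sum of $n$ i.i.d.\ $\pm1$-ish variables with mean $0$; the probability it is $\le 0$ is at most $\exp(-n/(2m^2))$-type... but we actually want the opposite — we want \emph{some} alternative to dominate, which under symmetric IC happens only with probability $\to 0$, so I must instead use the $\alpha$-IC spike: the adversary's $\vec\pi$ can (and to make the instance "solvable" effectively will) push one alternative ahead, but the cleanest route is to show that \emph{regardless} of $\vec\pi$, with the stated probability the empirical margins are all simultaneously within $\frac{n}{6m}$ (say) of their means, and the means (determined by $\vec\pi$) either exhibit a strict Condorcet winner — in which case the empirical profile does too and we certify it — or they are "flat," in which case we note that a flat-margin profile still has a well-defined Dodgson score computable in polynomial time because... hmm. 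The main obstacle is precisely this case analysis: when the adversary's distribution does \emph{not} induce a clear Condorcet-ish winner, Dodgson score computation is genuinely hard in the worst case, so the argument must show that the \emph{randomness} (the $\ge 1-\frac1m$ fraction of uniform draws, with $n=\Omega(m^2\log^2 m)$) forces enough concentration that either a Condorcet winner emerges or — more likely the intended argument — one directly shows that with high probability over the sample, \emph{there is a Condorcet winner}, using that the expected margin $\Ex{\text{net}_P(a,b)}$ for the "best" alternative $a$ is $\Omega(n/m)$ while the fluctuations are $O(\sqrt{n\log m})$, which is dominated once $n \gg m^2\log m$. Pinning down why the adversary is forced to create such a winner — i.e., that the Dodgson-score query is only "interesting" in that regime, or alternatively a self-contained claim that $\alpha$-IC profiles have Condorcet winners w.h.p.\ for \emph{every} adversary choice — is the crux, and I expect the proof to hinge on a careful Hoeffding bound applied pairwise plus the observation that the Dodgson score of a Condorcet winner is $0$ and that of any other alternative can be certified positive, so the algorithm just checks the (efficiently computable) majority graph.
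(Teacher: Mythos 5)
There is a genuine gap, and it is exactly the one you flag as the ``crux'': the hoped-for structural claim that, for \emph{every} adversarial choice of parameters, the sampled profile has a Condorcet winner with high probability is false. Under $\alpha$-IC with $\alpha \ge 1-\frac1m$ the adversary still controls a spike of mass up to $\frac1m$ per agent; splitting the agents into three groups whose spike rankings realize a $3$-cycle among $a,b,c$ makes the expected pairwise margins cyclic of magnitude $\Theta(n/m)$, and your own concentration bound then implies that w.h.p.\ the realized majority graph has a cycle, i.e.\ \emph{no} Condorcet winner. So the case your argument cannot handle is not a negligible corner but one the adversary can force. A second, independent problem is that even when a Condorcet winner exists, your algorithm only distinguishes ``score $0$'' from ``score positive,'' whereas {\sc Semi-Random}-\ds{} asks whether the Dodgson score of a \emph{given} alternative $a$ (not necessarily the winner) is at most an \emph{arbitrary} threshold $t$; certifying positivity does not answer, e.g., ``is the score of $b$ at most $5$?''.

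The paper avoids both issues by never trying to produce a Condorcet winner. It runs the {\sc Greedy} algorithm of Homan and Hemaspaandra, which outputs the \emph{exact} Dodgson score of the queried alternative $a$ whenever, for every $b\ne a$, there is $\beta>0$ with $|\{i: a\prec_i b\}|\le \frac n2+\beta$ (the deficit of $a$ against $b$ is at most $\beta$) and $|\{i: a\lessdot_i b\}|\ge\beta$ (at least $\beta$ voters rank $b$ immediately above $a$, so each unit of deficit is closed by one adjacent swap). The $\alpha$-IC structure enters only through the pointwise lower bound $\Pr[R]\ge\frac{m-1}{m\cdot m!}$ for every ranking, which gives $\Pr[a\prec_i b]\le\frac{m+1}{2m}$ and $\Pr[a\lessdot_i b]\ge\frac{m-1}{m^2}$ for each agent \emph{regardless of the spike}; Hoeffding with $\beta=\left(\frac34-\frac1{2m}\right)\frac nm$ and a union bound over the $m-1$ alternatives $b$ then yield the stated $1-2(m-1)\exp\left(-\frac{n}{72m^2}\right)$ success probability. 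Your pairwise concentration computations are in the right spirit, but without the greedy-certificate lemma (or some substitute that computes the exact score for an arbitrary queried alternative in the cyclic regime) the argument does not go through.
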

\begin{proof}
The algorithm runs the polynomial-time greedy algorithm, denoted as {\sc Greedy} in \citep{Homan2009:Guarantees} as a subroutine. Given $(P,a)$, the output of {\sc Greedy}$(P,a)$ belongs to $\mathbb{Z} \times \mathrm{(``definitely",``maybe")}$ such that if {\sc Greedy}$(P,a)$ outputs $(s,\mathrm{``definitely"})$, then $s$ is the Dodgson score of $a$ in $P$. Given \ds{} instance $(P,a,t)$, the algorithm runs {\sc Greedy}$(P,a)$ first. Then if {\sc Greedy}$(P,a)$ outputs $(s,\mathrm{``definitely"})$, the algorithm outputs YES or NO based on whether $s \le t$. Otherwise the algorithm declares failure. Therefore, it suffices to prove that when $P$ is generated from $\vec\alpha$-IC, {\sc Greedy}$(P,a)$ outputs with $\mathrm{``definitely"}$ with high probability. 

The following lemma, a simple extension of \cite[Theorem 4.1.1]{Homan2009:Guarantees}, gives a sufficient condition under which {\sc Greedy}$(P,a)$ outputs with $\mathrm{``definitely"}$. We introduce some new notations here. For two distinct alternatives $a,b$ and voter $i$, by $a \prec_i b$ we mean voter $i$ prefers $b$ to $a$. By $a \lessdot _i b$ we mean that not only voter $i$ prefers $b$ to $a$, but also there is no other alternative $c$ such that voter $i$ prefers $b$ to $c$ and prefers $c$ to $a$ i.e., $a \prec_i c \prec_i b$.

\begin{lemma}\label{lem:condition-definitely}
Given $P = (\prec_i)_{i\in [n]}$. For each alternative $a \in \ma_m$, if for all $b \in \ma_m\setminus\{a\}$ there exists $\beta > 0$ such that $\left|\{i\in [n]: a \prec_i b\}\right| \le \frac{n}{2}+\beta$ and $|\{i\in [n] : a \lessdot_i b\}|\ge \beta$ then {\sc Greedy}$(P,a)$ outputs with $\mathrm{``definitely"}$.
\end{lemma}

We give a sketch of the proof for Lemma~\ref{lem:condition-definitely} here. Recall that the Dodgson score of an alternative $a$ is the smallest number of exchanges between adjacent alternatives that makes $a$ a Condorcet winner. Now consider  alternative $b \ne a$ such that $a$ needs extra $\beta$ votes to defeat $b$. If $|\{i\in [n] : a \lessdot_i b\}|\ge \beta$, then $a$ defeats $b$ after exactly $\beta$ exchanges, which is also necessary. If this is the case for any alternative $b\ne a$, then we can decide in polynomial time the Dodgson score of $a$ with certainty.

\begin{claim}\label{Claim:dodgson concentration}
For any profile $P = (\succ_i)_{i\in [n]}$ generated from $\alpha_m$-IC, alternatives $a,b \in \ma_m$, and $\beta = (\frac{3}{4}-\frac{1}{2m})\frac{n}{m} > 0$, We have
\begin{itemize}
    \item $\Pr\left[|\{i\in [n] | a \prec_i b\}| > \frac{n}{2} + \beta\right] < \exp\left(-\frac{n}{72m^2} \right)$;
    \item $\Pr\left[|\{i\in [n] | a \lessdot_i b\}| < \beta\right] < \exp\left(-\frac{n}{72m^2} \right)$.
\end{itemize}
\end{claim}
\begin{proof}
Due to the space limit, we only prove the first inequality and leave the proof of the second inequality in Appendix \ref{appendix:dodgson-concentration}. 
We need the following technical lemma, which is a straightforward application of Hoeffding's inequality for bounded random variables, hence we omit the proof.
\begin{lemma}\label{lem: chernoff}
Let $X_1,\cdots,X_n$ be a sequence of mutually independent random variables. If there exist $q,p \in [0,1]$ such that $q \le p$ and for each $i \in \{1,\cdots,n\}$,
$$\Pr[X_i = 1-p] = q \text{ and }
    \Pr[X_i = -p] = 1-q,
$$
then for all $d  > 0$, we have $\Pr[\sum_{i=1}^n X_i > d] < e^{-2d^2/n}. $
\end{lemma}

Fix any $i \in [n]$. Denote $\pi_i \in \Pi_m$ the preference distribution of agent $i$. Since $\alpha_m \ge 1-\frac{1}{m}$, we know $\Pr_{\pi_i}[R] \ge \frac{m-1}{m\cdot m!}$ for any preference order $R \in \ml(\ma_m)$. Note that there are exactly $\frac{m!}{2}$ rankings in $\ml(\ma_m)$ such that $a$ is ranked above $b$. Therefore, we have
\begin{align*}
    \Pr[a \prec_i b]= 1- \Pr[b \prec_i a] \le 1-\frac{m!}{2}\cdot\frac{m-1}{m\cdot m!} = \frac{m+1}{2m}
\end{align*}
For each $i\in [n]$, define $X_i$ as
\[
    X_i = \begin{cases}
    \frac{m-1}{2m} & \mbox{if}~a \prec_i b\\
    -\frac{m+1}{2m} & \mbox{otherwise}
    \end{cases}
\]
It follows that $|\{i\in [n] | a \prec_i b\}| > \frac{n}{2} + \beta$ only if 
\begin{align*}
    \sum_{i=1}^n X_i &> \frac{m-1}{2m}\left(\frac{n}{2}+\beta\right)-\frac{m+1}{2m}\left(\frac{n}{2} -\beta\right) = \left(\frac{1}{4}-\frac{1}{2m}\right)\frac{n}{m} \ge\frac{n}{12m} \tag{$m \ge 3$}
\end{align*}
Note that $\Pr[X_i = \frac{m-1}{2m}] = \Pr[a \prec_i b] \le \frac{m+1}{2m}$. The claim follows by setting $d =\frac{n}{12m}$ and $p = \frac{m+1}{2m}$ in Lemma \ref{lem: chernoff}.
\end{proof}

Applying union bound for all $m-1$ alternatives in $\ma_m-\{a\}$ to Claim \ref{Claim:dodgson concentration}, we have
\begin{align*}
    &\Pr \bigg[\forall b \ne a, |\{i\in [n] : a \prec_i b\}| > \frac{n}{2} + \beta ~~\text{or}~~ |\{i\in [n]: a \lessdot_i b\}| < \beta \bigg] \\
    &\le 2(m-1)\exp\left(-\frac{n}{72m^2} \right)
\end{align*}
According to Lemma \ref{lem:condition-definitely}, with probability at least $1- 2(m-1)\exp\left(-\frac{n}{72m^2} \right)$, {\sc Greedy}$(P,a)$ outputs with $\mathrm{``definitely"}$. This completes the proof.
\end{proof}

According to Theorem \ref{thm:dodgson-easy-alphaIC}, we know that under $(1-\frac{1}{m})$-IC, {\sc Semi-Random}-\ds{} is in P when $n = \Omega(m^2\log^2 m)$. By Theorem \ref{thm:hard-ks}, {\sc Semi-Random}-\ks{} has no polynomial time algorithm under $(1-\frac{1}{m})$-IC unless NP = ZPP. The two results together provide an interesting separation of the semi-random complexity of winner determination under different NP-hard rules. 

\subsection{Proof of Theorem \ref{thm:hard-ks}}\label{sec:proof of ks}
We introduce some notations before the statement of assumption and the proof. For a profile $P \in \ml(\ma_m)^n$, its \emph{weighted majority graph \wmg(P)} is a weighted directed graph, and its vertices are represented by $\ma_m$. For any pair of alternatives $a,b\in \ma_m$, the weight on edge $a \rightarrow b$ is the number of agents that prefer $a$ to $b$ minus the number of agents that prefer $b$ to $a$. For a distribution $\pi$ over rankings, we define its weighted majority graph $\wmg(\pi)$ similarly: For any pair of alternatives $a,b\in \ma_m$, the weight on edge $a \rightarrow b$ is the probability that a ranking prefers $a$ to $b$ minus the probability that a ranking prefers $b$ to $a$. For each 3-cycle $a\rightarrow b \rightarrow c \rightarrow a$, its weight is defined as the sum of the weights on its three edges $a\rightarrow b$, $b\rightarrow c$, and $c \rightarrow a$. 

\begin{assumption}[\cite{Xia2021:The-Smoothed}]\label{asmpt:kemeny}$\vec\mm$ is P-samplable, neutral, and satisfies the following condition: there exist constants $k\ge 0$ and $A>0$ such that for any $m\ge 3$, there exist $\pi_{3c}\in \Pi_m$ such that $\wmg(\pi_{3c})$ has a 3-cycle $\gcyc$ with weight at least $\frac{A}{m^k}$
\end{assumption}

Assumption \ref{asmpt:kemeny} is weaker than Assumption \ref{asmpt:general-hard}. That's because in the distribution $\pi$ guaranteed by Assumption \ref{asmpt:general-hard}, the top-$K$ ranking remains unchanged with probability at least $1-\frac{1}{K}$, which implies that the 3-cycle formed by the top-3 alternatives has weight $\ge 1-\frac{2}{K}$ with $K = m^{\frac{1}{d}}$ for constant $d$. 
For $\alpha_m \in [0,1-\frac{1}{m^d}]$, the model $\alpha_m$-IC has a 3-cycle with weight at least $\mathcal{O}(\frac{1}{m^d})$ and thus also satisfies Assumption \ref{asmpt:kemeny}. We prove in Theorem~\ref{thn:hard-ks-general} the smoothed hardness of Kemeny under Assumption~\ref{asmpt:kemeny} which implies Theorem~\ref{thm:hard-ks}.

\begin{theorem}[\bf Smoothed Hardness of Kemeny]\label{thn:hard-ks-general}
For any single-agent preference model $\vec\mm$ that satisfies Assumption \ref{asmpt:kemeny}, there exists no polynomial-time algorithm for {\sc Semi-Random}-\ks{} unless {\sc NP}$=${\sc ZPP}. 
\end{theorem}

\begin{proof}
Suppose that {\sc Semi-Random}-\ks{} has a polynomial-time algorithm, denoted as \alg{}. We use it to construct a coRP algorithm for the NP-complete problem {\sc Eulerian Feedback Arc Set (EFAS)}~\cite{Perrot2015:Feedback}, which implies NP $=$ ZPP as discussed in the proof of Theorem \ref{thm:hard-ds}. An instance of \efas{} is denoted by $(G,t)$, where $t\in\mathbb N$ and $G$ is a directed unweighted Eulerian graph, which means that there exists a closed Eulerian walk that passes each edge exactly once. We are asked to decide whether $G$ can be made acyclic by removing no more than $t$ edges. 

Given a single-agent preference model, a {\em (fractional) parameter profile} $P^\Theta\in \Theta_m^n$ is a collection of $n>0$ parameters, where $n$ may not be an integer. Note that $P^\Theta$ naturally leads to a fractional preference profile, where the weight on each ranking represents its total weighted ``probability'' under all parameters in $P^\Theta$. We include an illustrating example of fractional parameter profile and fractional preference profile in Appendix~\ref{appendix:example}.

Let $(G = (V,E),t)$ be any \efas{} instance, where $|V| = m$.
\begin{claim}[\cite{Xia2021:The-Smoothed}]
We can construct a fractional preference profile $P^\Theta_G$ in polynomial time in $m$ such that there exists a constant $k$
\begin{itemize}
    \item $|P^\Theta_G| = \mathcal{O}(m^{k+2})$,
    \item $P_G^\Theta$ consists of $\mathcal{O}(m^5)$ types of parameters, 
    \item $\wmg(P_G^\Theta) = G$.
\end{itemize}
\end{claim}
 Let $K = 13+2k$, which means that $K>12$. We first define a parameter profile $P_G^{\Theta*}$ of $n=\Theta(m^{K})$ parameters that is approximately $\frac{m^{K}}{|P_G^{\Theta}|}$ copies of $P_G^\Theta$ up to $\mathcal{O}(m^5)$ in  $L_\infty$ error. Formally, let 
\begin{equation}\label{eq:pg}
P_G^{\Theta*} = \left\lfloor P_G^\Theta \cdot \dfrac{m^{K}}{|P_G^{\Theta}|}\right\rfloor
\end{equation}

Let $n=|P_G^{\Theta*}|$. Because the number of different types of parameters in $P_G^{\Theta*}$ is $\mathcal{O}(m^5)$, we have $n = m^{K}-\mathcal{O}(m^5)$, $\|\wmg(P_G^{\Theta*}) - \wmg(P_G^{\Theta} \cdot \frac{m^{K}}{|P_G^{\Theta}|}) \|_\infty  = \mathcal{O}(m^5)$, and $\|\wmg(P_G^{\Theta*}) - G \cdot \frac{m^{K}}{|P_G^{\Theta}|}) \|_\infty  = \mathcal{O}(m^5)$. Let $f(G,R)$ denote the number of backward arcs of linear order $R$ in a directed graph $G$. The following useful claim calculates the KT distance between $R$ and the parameter profile $P^\Theta_G \cdot \frac{m^K}{|P^\Theta_G|}$. The proof of Claim~\ref{claim:kt distance} can be found in Appendix~\ref{appendix:proof of kt distance}.

\begin{claim}
\label{claim:kt distance}
For any linear order $R \in \ml(\ma_m)$, the KT distance between $R$ and the fractional parameter profile $P^\Theta_G \cdot \frac{m^K}{|P^\Theta_G|}$ is $ \kt\left(P_G^\Theta \cdot \frac{m^{K}}{|P_G^{\Theta}|},R\right) = M + \frac{m^K}{|P^\Theta_G|} \cdot f(G,R)$, where $M =  \frac{m^K}{2}\left(\binom{m}{2}-\frac{|E|}{|P_G^\Theta|}\right)$.
\end{claim}

\begin{algorithm}
\caption{Algorithm for \efas{}.}
{\bf Input}: EFAS Instance $(G,t)$, $\alg$ 
\begin{algorithmic}[1] 
\STATE Compute a parameter profile $P_G^{\Theta*}$ according to (\ref{eq:pg}).
\STATE Sample a profile $P'$ from $\vec \mm_m$ given $P_G^{\Theta*}$.
\IF {$\|\wmg(P')-G_n\|_1 > \binom{m}{2}\cdot m^{\frac{K+1}{2}}$}
\STATE Return YES.
\ENDIF
\STATE Run $\alg$ on $\left(P', M+t\cdot\frac{m^K}{|P_G^\Theta|}+m^{k+10}\right)$.
\IF {\alg{} returns NO}
\STATE Return NO.
\ELSE
\STATE Return YES.
\ENDIF
\end{algorithmic}
\label{alg:rpalg-kemeny}
\end{algorithm}

We now prove that $\alg$ returns the correct answer to $(G,t)$ with probability at least $1-\exp(-\Omega(m))$. Let $G_n = G \cdot \frac{m^{K}}{|P_G^{\Theta}|}$. The following claim bounds the probability that $\wmg(P')$ is different from $G_n $ by more than $\Omega(m^{\frac{K+1}{2}})$.

\begin{claim}[\cite{Xia2021:The-Smoothed}]\label{claim:concentration}
$\Pr\left[\|\wmg(P') - G_n \|_1 >\binom{m}{2}\cdot m^{\frac{K+1}{2}}\right] <\exp(-\Omega(m))$.
\end{claim} 
\begin{claim}\label{claim:kemeny-reduction}
If $\|\wmg(P') - G_n \|_1 \le \binom{m}{2}\cdot m^{\frac{K+1}{2}}$, then $\left(P', M+t\cdot\frac{m^K}{2|P_G^\Theta|}+m^{k+10}\right)$ is a YES instance of \ks{} if and only if $(G,t)$ is a YES instance of \efas{}.
\end{claim}
\begin{proof} If $(G,t)$ is a YES instance of \efas{}, then there exists a linear order $R$ such that there are at most $t$ backward arcs in $G$ according to $R$. Considering $R$ as a ranking over alternatives, we have $ \kt\left(P^\Theta_G \cdot \frac{m^K}{|P^\Theta_G|},R\right)\le M+t\cdot\frac{m^K}{|P_G^\Theta|}$. By assumption we know $|\kt(P',R) - \kt(P_G^\Theta \cdot \dfrac{m^{K}}{|P_G^{\Theta}|},R)| =  \mathcal{O}(m^{\frac{K+5}{2}})$. Therefore, the kemeny score of ranking $R$ is at most 
\begin{align*}
    \kt(P',R) &\le \kt\left(P_G^\Theta \cdot \dfrac{m^{K}}{|P_G^{\Theta}|},R\right) + \mathcal{O}(m^{\frac{K+5}{2}})< M+t\cdot\frac{m^K}{|P_G^\Theta|}+m^{k+10},
\end{align*} 
which means $\left(P', M+t\cdot\frac{m^K}{|P_G^\Theta|}+m^{k+10}\right)$ is a YES instance.

If $(G,t)$ is a NO instance of \efas{}, then for any linear order $R$ of $|V|$, there are at least $t+1$ backward arcs in $G$ according to $R$. We have for any $R \in \ml(\ma_m)$, $ \kt\left(P_G^\Theta\cdot\frac{m^K}{|P_G^\Theta|},R\right)\ge M+(t+1)\cdot\frac{m^K}{|P_G^\Theta|}.$ Therefore, for any $R \in \ml(\ma_m)$, we have
\begin{align*}
    \kt(P',R) &\ge \kt\left(P_G^\Theta \cdot \dfrac{m^{K}}{|P_G^{\Theta}|},R\right) - \mathcal{O}(m^{\frac{K+5}{2}})\\
    &\ge M+t\cdot\frac{m^K}{|P_G^\Theta|}+\frac{m^K}{|P_G^\Theta|}-\mathcal{O}(m^{\frac{K+5}{2}})\\
    &= M+t\cdot\frac{m^K}{|P_G^\Theta|} + \Theta(m^{k+11}) - \mathcal{O}(m^{k+9})\\
    &> M+t\cdot\frac{m^K}{|P_G^\Theta|}+m^{k+10},
\end{align*}
which means $\left(P', M+t\cdot\frac{m^K}{|P_G^\Theta|}+m^{k+10}\right)$ is a NO instance of \ks{}.
\end{proof}

Note that Algorithm~\ref{alg:rpalg-kemeny} only returns NO in line 8, when $\|\wmg(P') - G_n \|_1 >\binom{m}{2}\cdot m^{\frac{K+1}{2}}$ and \alg{} returns NO. By Claim~\ref{claim:kemeny-reduction}, we know that Algorithm~\ref{alg:rpalg-kemeny} never returns NO for any YES instance of \efas{}, or equivalently, it always returns YES for YES instance. Since $\|\wmg(P') - G_n \|_1 \le\binom{m}{2}\cdot m^{\frac{K+1}{2}}$ holds with probability at least $1-\exp(-\Omega(m))$ and \alg{} returns with probability at least $1-\frac{1}{m}$, we know that Algorithm~\ref{alg:rpalg-kemeny} returns NO for NO instance of \efas{} with at least constant probability. This proves that \efas{} is in coRP and completes the proof.
\end{proof}

\section{Conclusion}
In this paper, we conduct semi-random complexity analysis of winner determination under various voting rules. We give the first semi-random complexity results for the Dodgson rule, the Young rule, the Chamberlin-Courant rule, and the Monroe rule. We also prove a hardness result for the Kemeny rule and a semi-random easiness result for the Dodgson rule, illustrating an interesting separation between the semi-random complexity of winner determination under different NP-hard voting rules.

As for future direction, an ambitious goal is to develop a dichotomy theorem for the semi-random complexity of winner determination: winner determination is efficient if and only if the semi-random model satisfies certain conditions. The semi-random complexity of winner determination under models beyond Assumption \ref{asmpt:general-hard} is a natural and interesting problem. We also conjecture that under the average-case analysis, \ys{} is easy to decide with high probability but \ks{} remains hard.

\section*{Acknowledgements}
We thank anonymous reviewers for helpful feedback and suggestions. LX acknowledges NSF \#1453542  and a gift fund from Google for support.

%
%
%
\bibliographystyle{plain}
\bibliography{ref4,new}

\clearpage
\appendix
\section{Proof of Lemma~\ref{lem:dodgson-reduction}}
\label{appendix:ds-reduction}
We first present the construction of profile $P_1$ and then we show why it satisfies the desired properties.
\begin{description}
    \item[Construction of $P_1$.] We first construct the set of alternatives $\ma_{m_1}$, which contains three type of alternatives. $\ma_{m_1}$ contains a \emph{critical alternative} $c$. For any $i\in [q]$ and element $u_i \in U$, $\ma_{m_1}$ contains two \emph{element alternatives} $a_i$ and $b_i$. We denote by $E = \{a_i,b_i| i\in \{1,\cdots,q\}\}$ the set of element alternatives. For any $j \in [s]$ and subset $S_j \in S$, $\ma_{m_1}$ contains one \emph{subset alternative} $s_j$. We denote by $H$ the set of subset alternatives.
    
    Now we construct the ranking profile $P_1$, which consists of the following three sub-profiles.
    \begin{description}
        \item[1. Swing Rankings] We create $s$ rankings for each member of $S$. For each subset $S_j = \{u_{j_1},u_{j_2},u_{j_3}\}$, denote $E_j = \{a_{j_1},a_{j_2},a_{j_3}\}$. Let $R_{S_j}$ be any ranking of the form $E_j \succ s_j \succ c \succ (E/\ E_j)\cup (H/\ s_j)$ where the order of alternatives in each part can be arbitrary. 
        We set $P_{1,1}$ be the profile containing $s$ swing rankings $R_{S_j}$ for all $j \in [s]$. It is easy to see $|P_{1,1}| = s$.

        The idea behind swing rankings $P_{1,1}$ is the following. Note that switching the special alternative $c$ up 1 position in $R_{S_j}$ gains 0 vote for $c$ against all element alternatives $E$; switching 2 times gains 1 vote; switching 3 times get 2 votes; switching 4 times get 3 votes. Thus, among the swing rankings $P_{1,1}$, any additional votes for $c$ over element alternatives in $E$ require $4/3$ switches per vote on the average. Moreover, to achieve $4/3$ switches per vote, $c$ must be switched 4 times to the very top in each switched ranking. 

        \item[2. Equalizing Rankings.] For $i \in [q]$, let $N_i = |\{S_j \in S | u_i \in S_j\}|$ be the number of subsets in $S$ that contains $u_i$. Let $N^* = \max\{N_1,N_2,\cdots,N_q\}$. Let $R_{u_i}$ be any linear order of the form $a_i \succ b_i \succ c \succ (E/\ \{a_i,b_i\}) \cup H$, where the order of the alternatives after $c$ can be arbitrary. We set $P_{1,2}$ to be the profile containing $N^{*} - N_i$ copies of $R_{u_i}$ for all $i \in [q]$. We also have $|P_{1,2}| \le \sum_{i \in [q]} N^*-N_i \le q N^* \le qs$.

        By adding equalizing rankings in $P_{1,2}$ to $P_{1,1}$, each alternative $a_i$ gets equal score in the pairwise competitions against $c$. Note that among equalizing rankings $P_{1,2}$, additional votes for $c$ over an element alternative $a_i$ require at least 2 switches per vote on the average.

        \item[3. Incremental Rankings.] Let $R_I$ be any ranking of the form $a_1 \succ \cdots \succ a_q \succ b_1 \succ \cdots \succ b_q \succ c \succ H $ where the order of the alternatives after $c$ can be arbitrary. We set $P_{1,3}$ the profile containing $N_I$ copies of $R_I$ such that $a_i$ will defeat $c$ by exactly 1 voter in $P_{1,1} \cup P_{1,2} \cup P_{1,3}$ for any $i \in [q]$. It is easy to see that $|P_{1,3}| \le |P_{1,1}| + |P_{1,2}| + 1 \le s+qs+1$.

        By adding incremental rankings $P_{1,3}$ to $P_{1,1}\cup P_{1,2}$, alternative $a_i$ defeats $c$ by exactly 1 vote for all $i \in [q]$. Besides, additional votes for $c$ over an element alternative $a_i$ among incremental rankings $P_{1,3}$ require at least 2 switches per vote in average. 
    \end{description}
    We set $P_1 = P_{1,1} \cup P_{1,2} \cup P_{1,3}$. According to the construction, we know that $|P_1| \le \sum_{i=1}^3 |P_{1,i}| \le 2(q+1)s+1 = \mathcal{O}(q^4)$. 
    \item[Reduction.] Recall that each element alternative $a_i$ wins exactly 1 vote against the critical alternative $c$ in $P_1$. Thus in order to make $c$ the Condorcet winner, $c$ must win against each $a_i$. However, this requires at least $4q/3$ switches and is achievable only if i) all switches are among swing rankings, and ii) each switched swing ranking move $c$ to the top of the preference by 4 switches. It is obvious from the construction of swing rankings that any collection of swing rankings that can elect $c$ by no more than $4q/3$ switches correspond to an exact 3-cover of $(U,S)$. Thus $(P_1,c,\frac{4q}{3})$ is a YES instance of \ds{} if and only if $(U,S)$ is a YES instance of \xc{}.
\end{description}
It is clear that the construction can be done in polynomial time in $q$. This completes the proof.

\section{Proof of Lemma \ref{lem:dodgson-toprobust}}\label{appendix:ds-toprobust}
Fix any profile $P_1 \in \ml(\ma_m)^n$, any integer $m'\ge 1$ and profile $P_2 \in \applast(P_1,m')$. Recall that the Dodgson score of alternative $a$ in profile $P$ is defined as the smallest number of sequential exchanges of adjacent alternatives in rankings of $P$ to make  $a$ the Condorcet winner. Therefore, the Dodgson score of $a$ does not depend on the order of alternatives that are less preferred than $a$ in each agent's preference. Thus the Dodgson score of $a$ in $P_1$ is equal to the Dodgson score of $a$ in $P_2 \in \applast(P_1,m')$. Besides, the Dodgson score of alternative $b \in \ma_{m+m'}\setminus \ma_{m}$ in $P_2$ is strictly higher than that of $a$ in $P_2$ since every agent prefers $a$ to $b$. Thus if $a$ is the Dodgson winner in $P_1$, $a$ is also the Dodgson winner in $P_2$.

\notshow{
\section{Proof of Lemma \ref{lem:dodgson-reduction}}\label{appendix:ds-reduction}
\paragraph{Construction of $P_1$.}
We first construct the set $\ma_{m_1}$ of alternatives, where $m_1 = 1+2q+s = \mathcal{O}(q^3)$. $\ma_{m_1}$ contains a critical alternative $c$. For any $i\in [q]$ and element $u_i \in U$, $\ma_{m_1}$ contains two element alternatives $a_i$ and $b_i$. We denote by $E = \{a_i,b_i| i\in \{1,\cdots,q\}\}$ the set of element alternatives. For any $j \in [s]$ and subset $S_j \in S$, $\ma_{m_1}$ contains one subset alternative $s_j$. We denote by $H$ the set of subset alternatives.

Now we construct the ranking profile $P_1$, which consists of the following three sub-profiles.

\noindent{\bf 1. Swing Rankings.}
We create $s$ rankings for each member of $S$. For each subset $S_j = \{u_{j_1},u_{j_2},u_{j_3}\}$, denote $E_j = \{a_{j_1},a_{j_2},a_{j_3}\}$. Let $R_{S_j}$ be any ranking of the form $E_j \succ s_j \succ c \succ (E/\ E_j)\cup (H/\ s_j)$ where the order of alternatives in each part can be arbitrary. 
We set $P_{1,1}$ be the profile containing $s$ swing rankings $R_{S_j}$ for all $j \in [s]$. It is easy to see $|P_{1,1}| = s$.

The idea behind swing rankings $P_{1,1}$ is the following. Note that switching the special alternative $c$ up 1 position in $R_{S_j}$ gains 0 vote for $c$ against all element alternatives $E$; switching 2 times gains 1 vote; switching 3 times get 2 votes; switching 4 times get 3 votes. Thus, among the swing rankings $P_1,1$, any additional votes for $c$ over element alternatives in $E$ require 3/4 switches per vote on the average. Moreover, to achieve 3/4 switches per vote, $c$ must be switched 4 times to the very top in each switched ranking.  

\noindent{\bf 2. Equalizing Rankings.}
For $i \in [q]$, let $N_i = |\{S_j \in S | u_i \in S_j\}|$ be the number of subsets in $S$ that contains $u_i$. Let $N^* = \max\{N_1,N_2,\cdots,N_q\}$. Let $R_{u_i}$ be any linear order of the form $a_i \succ b_i \succ c \succ (E/\ \{a_i,b_i\}) \cup H$, where the order of the alternatives after $c$ can be arbitrary. We set $P_{1,2}$ be the profile containing $N^{*} - N_i$ copies of $R_{u_i}$ for all $i \in [q]$. We also have $|P_{1,2}| \le \sum_{i \in [q]} N^*-N_i \le q N^* \le qs$.

By adding equalizing rankings in $P_{1,2}$ to $P_{1,1}$, each alternative $a_i$ gets equal score in the pairwise competitions against $c$. Note that among equalizing rankings $P_{1,2}$, additional votes for $c$ over element alternatives $E$ require at least 2 switches per vote on the average.

\noindent{\bf 3. Incremental Parameters.} 
Let $R_I$ be any ranking of the form $a_1 \succ \cdots \succ a_q \succ b_1 \succ \cdots \succ b_q \succ c \succ H $ where the order of the alternatives after $c$ can be arbitrary. We set $P_{1,3}$ the profile containing $N_I$ copies of $R_I$ such that $a_i$ will defeat $c$ by exactly 1 voter in $P_{1,1} \cup P_{1,2} \cup P_{1,3}$ for any $i \in [q]$. It is easy to see that $|P_{1,3}| \le |P_{1,1}| + |P_{1,2}| + 1 \le s+qs+1$.

By adding incremental rankings $P_{1,3}$ to $P_{1,1}\cup P_{1,2}$, alternative $a_i$ defeats $c$ by exactly 1 vote for all $i \in [q]$. Besides, additional votes for $c$ over element alternatives $E$ among incremental rankings $
P_{1,3}$ require at least 2 switches per vote in average. 

We set $P = P_{1,1} \cup P_{1,2} \cup P_{1,3}$. According to the construction, we know that $|P| \le \sum_{i=1}^3 |P_{1,i}| \le 2(q+1)s+1 = \mathcal{O}(q^4)$.

\paragraph{Reduction.} Recall that each element alternative $a_i$ wins exactly 1 vote against the critical alternative $c$ in $P_1$. Thus in order to make $c$ the Condorcet winner, $c$ must win against each $a_i$. However, this requires at least $3q/4$ switches and is achievable only if i) all switches are among swing rankings, and ii) each switched swing ranking move $c$ to the top of the preference by 4 switches. It is obvious from the construction of swing rankings that any collection of swing rankings that can elect $c$ by no more than $4q/3$ switches correspond to an exact 3-cover of $(U,S)$. This completes the proof.
}
\section{Proof of Theorem \ref{thm:hard-ys}}\label{appendix:young}
Note that we only need to provide the following analog results regarding the Young rule to make the proof of the semi-random hardness of \ds{} (Theorem \ref{thm:hard-ds}) also work for \ys{}.

\begin{definition}[\bf {\sc Semi-Random}-\ys{}]
Fix a series of single-agent preference models $\vec\mm$. Given alternative $a \in \ma_m$, $t \in \mathbb{N}$ and a semi-random profile $P$ drawn from $\mm_m$, we are asked to decide whether the Young score of $a$ is at least $t$, with probability at least $1-\frac{1}{m}$.
\end{definition}

\begin{lemma}\label{lem:young-toprobust}
For any profile $P_1 \in \ml(\ma_m)^n$, any integer $m'\ge 1$ and profile $P_2 \in \applast(P_1,m')$, the following holds for any alternative $a \in \ma_{m}$:
\begin{itemize}
    \item If $a$ is Young winner in $P_1$, then $a$ is also Young winner in $P_2$.
    \item The Young score of $a$ in $P_1$ is equal to that in $P_2$.
\end{itemize}
\end{lemma}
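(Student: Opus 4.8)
The plan is to follow the template of the proof of Lemma~\ref{lem:dodgson-toprobust}: both bullets will reduce to the single observation that appending alternatives at the bottom of every ranking changes neither the pairwise majority margins among the original alternatives of $\ma_m$ nor the fact that every alternative of $\ma_m$ is ranked above every appended alternative by every agent. Fix $P_1\in\ml(\ma_m)^n$, an integer $m'\ge 1$, and $P_2\in\applast(P_1,m')$, and write $\ma_{m+m'}=\ma_m\cup D$ where $D$ is the set of the $m'$ appended alternatives, $|D|=m'$.

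For the second bullet (equality of Young scores of $a$), I would show the stronger statement that for \emph{every} alternative $b\in\ma_m$ and \emph{every} subset of agents $T\subseteq[n]$, $b$ is the Condorcet winner of the sub-profile of $P_2$ on $T$ if and only if $b$ is the Condorcet winner of the sub-profile of $P_1$ on $T$. Indeed, appending $D$ at the bottom of each ranking leaves every pairwise comparison within $\ma_m$ untouched, so $b$'s majority margins against the alternatives of $\ma_m$ are identical in the two sub-profiles; moreover, in $P_2$ every agent ranks all of $\ma_m$ above all of $D$, so within $T$ the alternative $b$ beats each $d\in D$ by all $|T|$ votes. Taking the maximum over all $T$, the Young score of $b$ in $P_1$ equals the Young score of $b$ in $P_2$; in particular this holds for $b=a$.

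For the first bullet I would additionally bound the Young score of a dummy alternative $d\in D$ in $P_2$: since every agent ranks every alternative of $\ma_m$ above $d$, for any nonempty $T$ the alternative $d$ loses unanimously to each element of $\ma_m$, so $d$ is the Condorcet winner of no nonempty sub-profile and its Young score is $0$. On the other hand, any alternative that tops some agent's ranking is the Condorcet winner of that singleton sub-profile, so the Young winner $a$ of $P_1$ has Young score at least $1$. Combining this with the previous paragraph, the Young score of $a$ in $P_2$ is at least that of every $b\in\ma_m$ (unchanged from $P_1$) and strictly larger than that of every $d\in D$, hence $a$ is also the Young winner in $P_2$.

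The only place requiring a little care is the convention for Condorcet winners of degenerate (empty or singleton) sub-profiles, which is exactly why I would spell out the ``score $\ge 1$'' remark; beyond that, everything is immediate from the definitions, just as in Lemma~\ref{lem:dodgson-toprobust}, so I do not anticipate a genuine obstacle. With Lemma~\ref{lem:young-toprobust} established, Theorem~\ref{thm:hard-ys} then follows the proof of Theorem~\ref{thm:hard-ds} almost verbatim: the \xc{}-to-\ys{} reduction of~\cite{Caragiannis09:Approximability} replaces the Dodgson reduction of Lemma~\ref{lem:dodgson-reduction} in Step~1 to produce the base profile $P_1$, the padding-by-dummies construction of $P^\Theta$ in Step~1 and the $\topgroup_{m_1}$-concentration argument (Claims analogous to \ref{claim:reduction} and \ref{claim:goodsample}) carry over unchanged using Assumption~\ref{asmpt:general-hard} and Lemma~\ref{lem:young-toprobust}, and the resulting randomized procedure is a coRP algorithm for \xc{}, forcing {\sc NP}$=${\sc ZPP}.
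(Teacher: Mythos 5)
Your proposal is correct and takes essentially the same route as the paper's own proof: appending the $m'$ dummies at the bottom of every ranking leaves all pairwise comparisons among $\ma_m$ (hence every Condorcet-winner check on every sub-profile, hence every Young score of an alternative in $\ma_m$) unchanged, and the dummies can never displace the original winner. The only divergence is in how the dummies are handled, where your version is the more careful one: you correctly observe that a bottom-appended dummy is never the Condorcet winner of any nonempty sub-profile (Young score $0$, against the winner's score $\ge 1$ via a singleton sub-profile), whereas the paper's appendix asserts the dummy's Young score in $P_2$ is ``exactly $n$,'' which does not match the paper's own largest-subset definition of the Young score and highest-score-wins convention.
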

\begin{proof}
Fix any profile $P_1 \in \ml(\ma_m)^n$, any integer $m'\ge 1$ and profile $P_2 \in \applast(P_1,m')$. Recall that the Young score of alternative $a$ in profile $P$ is defined as the size of the largest subset of $P$ to make $a$ the Condorcet winner. Therefore, the Young score of $a$ does not depend on the order of alternatives that are less preferred than $a$ in each agent's preference. Thus the Young score of $a$ in $P_1$ is equal to the Young score of $a$ in $P_2 \in \applast(P_1,m')$. Besides, the Young score of alternative $b \in \ma_{m+m'}\setminus \ma_{m}$ in $P_2$ is exactly $n$ where the Young score of $a$ is at most $n$. Thus if $a$ is the Young winner in $P_1$, $a$ is also the Young winner in $P_2$.
\end{proof}
The following lemma directly follows from the reduction in \citep{Caragiannis09:Approximability} and we omit the proof.
\begin{lemma}\label{lem:young-reduction}
We can construct a profile $P_1 \in \ml(\ma_{m_1})^n$ with $m_1 = \mathrm{poly}(n)$ and $n = \mathrm{poly(n)}$, and an alternative $c$
 such that $(P_1,c,1)$ is a YES instance of \ys{} if and only if $(U,S)$ is a YES instance of \xc{}. The construction can be done in polynomial time in $q$.
\end{lemma}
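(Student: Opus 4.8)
The plan is to exhibit a polynomial-time many-one reduction from \xc{} to \ys{}, playing the same role for the Young rule that Lemma~\ref{lem:dodgson-reduction} plays for the Dodgson rule; as the statement signals, this is the reduction of Caragiannis et al.~\cite{Caragiannis09:Approximability}, so I would set up the gadget and verify the two directions of the equivalence rather than rederive every multiplicity. First I would fix the alternative set to be a distinguished alternative $c$, one \emph{element alternative} $e_i$ for each $u_i\in U$, and a polynomial number of padding alternatives whose only job is to pin down pairwise margins, so $m_1=\mathrm{poly}(q)$. Then I would assemble $P_1$ from two groups of voters, each with polynomially many copies (so $n=\mathrm{poly}(q)$ and the construction is $\mathrm{poly}(q)$-time): a fixed \emph{background} group, independent of $S$, chosen so that $c$ beats every padding alternative by a wide margin, $c$ trails each $e_i$ by a small controlled margin, and \emph{no voter ranks $c$ first}; and, for each $S_j=\{u_{j_1},u_{j_2},u_{j_3}\}$, a \emph{set block} $W_j$ whose voters rank $e_i\succ c$ for $i\in S_j$ and $c\succ e_i$ for $i\notin S_j$ (with the padding alternatives at the bottom). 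The key calibration, exactly as in \cite{Caragiannis09:Approximability}, is to choose the block sizes and the background margins so that a sub-profile in which $c$ is the Condorcet winner must delete, for every $u_i$, \emph{precisely one} set block containing $u_i$: covering $u_i$ at all is what flips $c$ ahead of $e_i$, while over-covering it pushes $c$ behind some $e_{i'}$ through the ``$c\succ e_{i'}$'' votes in the extra deleted block.

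With the gadget in place I would check the equivalence. For the forward direction, if $(U,S)$ has an exact cover $S'$, then the sub-profile obtained by deleting $\bigcup_{S_j\in S'}W_j$ and keeping everything else has exactly one covering block removed for each $e_i$, which by the margin calibration puts $c$ strictly ahead of every $e_i$, while the background keeps $c$ strictly ahead of every padding alternative regardless of the deletions; hence $c$ is the Condorcet winner of this nonempty sub-profile, its Young score is at least $1$, and $(P_1,c,1)$ is a YES instance of \ys{}. For the converse, suppose $c$ is the Condorcet winner of some sub-profile $P'$. Since no voter ranks $c$ first and $c$ trails every $e_i$ in $P_1$, obtaining $P'$ requires deleting, for each $u_i$, at least one set block covering $u_i$; and the calibration forbids deleting two or more covering blocks for the same $u_i$ and rules out any shortcut that instead removes background voters. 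Thus the deleted set blocks form an exact cover of $U$, so $(P_1,c,1)$ YES implies $(U,S)$ YES, and combined with the forward direction this gives the claimed equivalence; polynomiality of the construction has already been noted.

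I expect the backward direction to be the main obstacle: one has to fix the background multiplicities and the strength of the set blocks so that the two constraints ``cover every element'' and ``do not over-cover any element'' become simultaneously tight, and then carefully rule out the spurious ways of making $c$ a Condorcet winner of a sub-profile --- deleting background voters, exploiting the padding alternatives, or combining partial covers --- none of which may sneak past the margins. This is precisely the bookkeeping carried out in \cite{Caragiannis09:Approximability}, which is why Lemma~\ref{lem:young-reduction} is stated as an immediate consequence of their reduction rather than proved here from scratch.
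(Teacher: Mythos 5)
Your proposal is correct in spirit and takes essentially the same route as the paper: the paper itself gives no proof of this lemma, stating only that it ``directly follows from the reduction in [Caragiannis et al.]'', which is exactly the reduction whose structure you sketch (in particular you correctly identify the crucial feature that no voter ranks $c$ first, which is what makes the threshold $t=1$ nontrivial) while deferring the margin bookkeeping to the same reference. So your write-up is, if anything, more explicit than the paper's omitted proof, and no gap relative to the paper's argument needs to be flagged.
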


\section{Extension to Multi-Winner Voting Rules}\label{appendix:cc-monroe}
\subsection{The $\cc$ rule and the Monroe rule.}  \

A \emph{positional scoring function} (PSF) is a function $\alpha^m: [m] \rightarrow \mathbb{Z}$. A PSF $\alpha^m$ is a \emph{decreasing positional scoring function} (DPSF) if for each $i,j \in [m]$, if $i < j$ then $\alpha^m(i) > \alpha^m(j)$.
Denote a family of DPSFs $(\alpha^m)_{m\ge3}$, where $\alpha^m$ is a DPSF on $[m]$, such that $\alpha^{m+1}(i) = \alpha^m(i)$ holds for all $m \ge 3$ and $i \in [m]$\footnote{Note that any family of PSFs would satisfy this constraint by consistently modifying the score on each position. This modification does not affect the complexity of winner determination problems. }. 
For each alternative $a$, denote $\pos_i(a)$ its position in agent $i$'s preference. 
Agent $i$'s satisfaction for $a$ is given by $\alpha^m(\pos_i(a))$. 


We now define the $\cc$ (CC) rule and the Monroe rule. 
Given a committee $C$, we call the function $\Phi_C : \mn \rightarrow C$ an {\em assignment function} for $C$, where $\mn$ is the set of agents. A {\em Monroe assignment function} should satisfy another constraint that the number of agents that assigned to each alternative is approximately equal:  $ \lfloor \frac{|\mn|}{|C|} \rfloor \le |\Phi_c^{-1}(a)| \le \lceil \frac{|\mn|}{|C|}\rceil$ holds for each alternative $a \in C$. Given a DPSF $\alpha$, committee $C \subseteq \ma_m$, and an (Monroe) assignment function $\Phi_C$, we use the following two functions to aggregate individual satisfaction:
\begin{align*}
  &\mathcal{I}_{\text{sum}}^{\alpha}(\Phi_C) = \sum_{i=1}^n \alpha^m(\pos_i(\Phi_C(i))),\\
  &\mathcal{I}_{\text{min}}^{\alpha}(\Phi_C) = \min_{i\in[n]} \alpha^m(\pos_i(\Phi_C(i))).
\end{align*}

We use the first function, $\mathcal{I}_{\text{sum}}^{\alpha}(\Phi_C)$ in the utilitarian framework. The score assigned to committee $C$ in the CC rule, denote as $\text{Utilitarian-Score}^{\alpha}_{\text{CC}}(C)$ is the maximum of $\mathcal{I}_{\text{sum}}^{\alpha}(\Phi_C)$ over all assignment functions. Similarly, the score assigned to committee $C$ in the Monroe rule, denote as $\text{Utilitarian-Score}^{\alpha}_{\text{M}}(C)$ is the maximum of $\mathcal{I}_{\text{sum}}^{\alpha}(\Phi_C)$ over all Monroe assignment functions. In the egalitarian framework, we use the second function in the definition of $\text{Egalitarian-Score}^{\alpha}_{\text{CC}}(C)$ and $\text{Egalitarian-Score}^{\alpha}_{\text{M}}(C)$.

The corresponding decision problems of the CC rule and the Monroe rule are defined below, where we are asked to decide whether there exists a $k$-committee whose score is at least a given threshold. Note that hardness of this problem implies hardness of finding the winners. 
\begin{definition}[\bf Winner determination problems of the CC rule and the Monroe rule]
Let $\alpha$ be any family of DPSFs. We are given $P \in \ml(\ma_m)^n, k \in [m], t \in \mathbb{Z}$. In $\alpha\satcc{}$ under the utilitarian framework, we are asked whether there exists a $k$-committee $C$ such that  $\textnormal{Utilitarian-Score}^{\alpha}_{\text{M}}(C)$ is at least $t$. In $\alpha\satcc{}$ under the egalitarian framework, we are asked whether there exists a $k$-committee $C$ such that  $\textnormal{Egalitarian-Score}^{\alpha}_{\text{M}}(C)$ is at least $t$. 
The problem $\alpha\satm{}$ is defined similarly.
\end{definition}
A family of functions $s = \{s^k_{m,n} : \ml(\ma_m)^n \times \ma_m^k  \rightarrow \mathbb{Z},  n,k\ge 1, m\ge \max\{3,k\}\}$ is called a \emph{score function}. 
For each preference profile $P \in \ml(\ma_m)^n$, a \emph{score-based voting rule} $r_s$ assigns a score to each $k$-committee according to $s^k_{m,n}$, and chooses the winners to be the set of $k$-committees with the highest (or the lowest score). 
called the $(r_s,P,k)$-winners. Note that the CC rule and the Monroe rule defined above are both score-based voting rules and their winner determination problems are of the following form: Given a voting profile $P \in \ml(\ma_m)^n$, a score function $s_{m,n}^k$, and $t \in \mathbb{Z}$, decide if there exists a $k$-committee $C \subseteq \ma_m$ such that $s_{m,n}^k(C)$ is at least $t$. So we also use \winner{$r_s$} to refer the winner determination problem under the voting rule $r_s$.

\subsection{Proof of Theorem \ref{thm:hard-cc-monroe}}
Note that we only need to provide the following results  to make the proof of the semi-random hardness of \ds{} (Theorem \ref{thm:hard-ds}) also work for the CC rule and the Monroe rule.
\begin{lemma}\label{lem:cc-monroe-toprobust}
Let $r_s$ be the CC rule or the Monroe rule. For any profile $P_1 \in \ml(\ma_m)^n$, any integer $m'\ge 1$ and profile $P_2 \in \applast(P_1,m')$, the following holds for any $k$-committee $C \subseteq \ma_{m}$:
\begin{itemize}
    \item If $C$ is $r_s$ winner in $P_1$, then $a$ is also $r_s$ winner in $P_2$.
    \item The $r_s$ score of $a$ in $P_1$ is equal to that in $P_2$.
\end{itemize}
\end{lemma}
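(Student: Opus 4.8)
The plan is to exploit a single structural fact: appending $m'$ alternatives to the bottom of every ranking does not change the relative order of the alternatives of $\ma_m$, so each $a\in\ma_m$ keeps the same position $\pos_i(a)\le m$ in $P_2$ as in $P_1$. Together with the consistency requirement on the DPSF family ($\alpha^{m+m'}(i)=\alpha^m(i)$ for $i\in[m]$), this gives that the per-agent satisfaction $\alpha(\pos_i(a))$ of every original alternative is identical under $P_1$ and $P_2$. I would also record that, for a fixed committee $C$, whether a map $\Phi_C:\mn\to C$ is an admissible (Monroe) assignment function — that is, whether $\lfloor n/k\rfloor\le|\Phi_C^{-1}(a)|\le\lceil n/k\rceil$ for all $a\in C$ — depends only on $n=|\mn|$ and $k=|C|$, which are unchanged. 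Hence for any $k$-committee $C\subseteq\ma_m$ the admissible assignment functions, and the aggregates $\mathcal{I}_{\text{sum}}^{\alpha}(\Phi_C)$ and $\mathcal{I}_{\text{min}}^{\alpha}(\Phi_C)$ they produce, are literally the same in $P_1$ and $P_2$; taking the maximum over $\Phi_C$ yields the second bullet (equality of the $r_s$ score of $C$) in both the utilitarian and the egalitarian framework.

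For the first bullet I would show that if $C\subseteq\ma_m$ is an $r_s$ winner in $P_1$ then every $k$-committee $C'\subseteq\ma_{m+m'}$ has $r_s$ score in $P_2$ at most that of $C$. If $C'\subseteq\ma_m$ this is immediate from the winner property in $P_1$ and the second bullet. Otherwise let $D:=\ma_{m+m'}\setminus\ma_m$ be the set of appended alternatives and $\ell:=|C'\cap D|\ge1$; since $k\le m$ there are at least $\ell$ alternatives of $\ma_m$ outside $C'$, so I can replace the dummies of $C'$ by distinct such originals to get a $k$-committee $C''\subseteq\ma_m$. Given any admissible $\Phi_{C'}$, build $\Phi_{C''}$ by keeping the assignment on $\Phi_{C'}^{-1}(\ma_m)$ and redirecting every agent assigned to a dummy $b$ to the original alternative that replaced $b$: this preserves the Monroe balance (it only permutes committee members), and since any dummy has position $\ge m+1$ in every ranking while the replacing original has position $\le m$, strict monotonicity of the DPSF strictly increases each affected agent's satisfaction. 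Therefore $\mathcal{I}_{\text{sum}}^{\alpha}(\Phi_{C''})\ge\mathcal{I}_{\text{sum}}^{\alpha}(\Phi_{C'})$ and $\mathcal{I}_{\text{min}}^{\alpha}(\Phi_{C''})\ge\mathcal{I}_{\text{min}}^{\alpha}(\Phi_{C'})$, so the $r_s$ score of $C'$ in $P_2$ is at most that of $C''$ in $P_2$, which by the second bullet equals that of $C''$ in $P_1$, which is at most that of $C$ in $P_1$, which again by the second bullet equals that of $C$ in $P_2$. Hence $C$ is an $r_s$ winner in $P_2$.

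This lemma is essentially bookkeeping and I expect no deep obstacle; the one point that needs care is the winner-preservation direction, specifically that the operation ``replace every dummy in $C'$ by a fresh alternative of $\ma_m$'' is always available — this is exactly where the hypothesis $k\le m$ is used — and that the induced reassignment of agents still respects the Monroe balance constraint, which it does because we only relabel committee members. I would state the monotonicity claim ($\mathcal{I}_{\text{sum}}^{\alpha}$ and $\mathcal{I}_{\text{min}}^{\alpha}$ both weakly increase under the reassignment) once and apply it uniformly to the CC rule, the Monroe rule, and both the utilitarian and egalitarian frameworks.
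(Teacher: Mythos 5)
Your proposal is correct and follows essentially the same route as the paper: score equality via the consistency of the DPSF family together with unchanged positions and the fact that the set of admissible (Monroe) assignments depends only on $n$ and $k$, and winner preservation because appended alternatives are ranked below all of $\ma_m$ by every agent. Your explicit swap-and-reassign argument (replacing dummies in a committee $C'$ by fresh alternatives of $\ma_m$ and redirecting the affected agents, which preserves Monroe balance and weakly increases both $\mathcal{I}_{\text{sum}}^{\alpha}$ and $\mathcal{I}_{\text{min}}^{\alpha}$) simply spells out the step the paper asserts in one sentence, so there is nothing to correct.
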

\begin{proof}
We only prove for the CC rule and the utilitarian framework. The proof for other cases such as the Monroe rule and the egalitarian framework is similar. 

According to the definition of family of DPSF, we have $\alpha^{m}(i) = \alpha^{m+1}(i) = \cdots = \alpha^{m+m'}(i)$. Since the agent set does not change, the optimal assignment function $\Phi_C$ is the same for $P_1$ and $P_2$. It follows that 
\begin{align*}
   s^k_{m,n}(P_1,C)&= \sum_{i=1}^n \alpha^{m}(\pos_i(\Phi_c(i)))\\
    &= \sum_{i=1}^n \alpha^{m+m'}(\pos_i(\Phi_c(i)))\\
    &= s^k_{m+m',n}(P_2,C)
\end{align*}

Since $\alpha$ is a family of DPSF, any alternative in $\ma_{m+m'} - \ma_{m}$ has strictly lower score than any alternative in $\ma_{m_1}$ for any agent. 
It follows that $(r_s,P_1,k)$-winner $\subseteq$ $(r_s, P_2,k)$-winner. This completes the proof.
\end{proof}
The following lemma follows by reductions in \citep{Procaccia2008:On-the-complexity}.
\begin{lemma}\label{lem:cc-monroe-reduction}
Let $r_s$ be the CC rule or the Monroe rule. 
We can construct a profile $P_1 \in \ml(\ma_{m_1})^n$ with $m_1 = \mathrm{poly}(n)$ and $n = \mathrm{poly(n)}$, $k \in [m_1]$, and $t \in \mathbb{Z}$ such that 
such that $(P_1,k,t)$ is a YES instance of \winner{$r_s$} if and only if $(U,S)$ is a YES instance of \xc{}. The construction can be done in polynomial time in $q$.
\end{lemma}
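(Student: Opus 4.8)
The plan is to reconstruct the \xc{}-to-winner-determination reductions of \citet{Procaccia2008:On-the-complexity} and to instantiate the two free parameters (the DPSF family $\alpha$ and the threshold $t$) so that a single construction yields the required biconditional for each of the \cc{} and Monroe rules under both the utilitarian and the egalitarian frameworks. Given an \xc{} instance $(U,S)$, I would first assume without loss of generality that every element of $U$ lies in exactly three sets of $S$ (this restriction of \xc{} is well known to remain NP-complete), so that $|S|=q$ and every element's ``covering block'' has the uniform size three. I then build a profile $P_1\in\ml(\ma_{m_1})^n$ with one \emph{subset-alternative} $a_j$ per set $S_j$ (so $m_1=q$), one \emph{element-voter} $v_i$ per element $u_i$ (so $n=q$), committee size $k=q/3$, and the ranking of $v_i$ placing the three subset-alternatives $a_j$ with $u_i\in S_j$ in its top three positions and all remaining alternatives below. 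The DPSF family is chosen independent of $m$ (hence trivially consistent across $m$) with a large drop after the third position, e.g.
\[
\alpha(1)=3,\quad \alpha(2)=2,\quad \alpha(3)=1,\quad \alpha(i)=-(i-3)\cdot Q\ \text{ for } i\ge 4,
\]
where $Q$ is an integer exceeding $2q$; this keeps $\alpha$ strictly decreasing while forcing any voter matched outside its top block to contribute an overwhelmingly negative score.

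The correctness argument for the \cc{} rule rests on a coverage count. Since each committee member is a set of size three and $k=q/3$, a committee can place every voter's top block inside it only if the chosen sets are pairwise disjoint, i.e.\ form an exact cover; conversely an exact cover does so. Under the egalitarian framework I would set $t=\alpha(3)=1$: if $(U,S)$ admits an exact cover then every voter has a committee member in its top three positions, so the minimum satisfaction is at least $\alpha(3)$, whereas if some voter is uncovered its best committee member sits below position three, giving minimum satisfaction at most $\alpha(4)<\alpha(3)$. Under the utilitarian framework I would set $t=q\cdot\alpha(3)=q$: an exact cover yields total satisfaction at least $q\alpha(3)$, while a single uncovered voter caps the total at $3(q-1)+\alpha(4)<q$ by the choice of $Q$. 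In both cases $(P_1,k,t)$ is a YES instance of \winner{$r_s$} if and only if $(U,S)$ is a YES instance of \xc{}.

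For the Monroe rule the extra ingredient is the near-equal assignment constraint: with $n=q$ voters and $k=q/3$ committee members, each member must be assigned exactly three voters. The key structural claim to establish is that a Monroe assignment can match every voter to a committee member inside its top block \emph{if and only if} the committee's sets partition $U$: indeed a subset-alternative $a_j$ is top-ranked only by the three voters in $S_j$, so if all three of its assignees lie in its top block they must be exactly those three voters, and ranging over the committee this forces the chosen sets to be disjoint and to cover $U$. With this claim in hand the two frameworks follow exactly as for \cc{}: set $t=\alpha(3)$ (egalitarian) or $t=q\cdot\alpha(3)$ (utilitarian), using the large post-third-position drop $Q>2q$ for the utilitarian separation, and observe that any committee containing a padding alternative (top-ranked by no voter) is immediately penalized on its three forced assignees. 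The construction is clearly polynomial in $q$, giving $m_1,n=\mathrm{poly}(q)$.

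I expect the main obstacle to be the Monroe direction, precisely the ``only if'' half of the partition claim above, where the fixed capacity of three per committee member must be combined with the fact that each set has size three to pin the optimal assignment down to an exact cover; alongside this, choosing $\alpha$ and $t$ so that a single threshold cleanly separates exact covers from non-covers simultaneously in the summation (utilitarian) and minimum (egalitarian) objectives is the other delicate point, and the large gap $Q$ is what reconciles the strictly decreasing DPSF requirement with an essentially ``approval-style'' separation. Finally I note that this $P_1$ is exactly what the surrounding hardness proof needs: appending dummy alternatives at the bottom (the \applast{} operation) preserves the winner and the score by Lemma~\ref{lem:cc-monroe-toprobust}, so the reduction composes with the top-$K$ concentration machinery used in the proof of Theorem~\ref{thm:hard-ds}.
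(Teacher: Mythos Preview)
The paper's own ``proof'' of this lemma is nothing more than a citation to \citet{Procaccia2008:On-the-complexity}, so you have actually gone further than the paper by spelling out a candidate reduction. Your construction is sound for the \emph{egalitarian} variants of both rules: with threshold $t=\alpha(3)$, an exact cover pushes every voter's assigned alternative into her top three, while any non-cover leaves at least one voter whose best committee member sits at position~$\ge 4$, and this argument uses only that $\alpha$ is strictly decreasing.

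The gap is in the \emph{utilitarian} case. The rule $r_s$ is parameterised by a DPSF family $(\alpha^m)_{m\ge 3}$ that is fixed \emph{before} the \xc{} instance is seen and must satisfy the consistency constraint $\alpha^{m+1}(i)=\alpha^m(i)$ for all $i\le m$ (this constraint is exactly what makes Lemma~\ref{lem:cc-monroe-toprobust} go through when you later pad with dummy alternatives). Your choice $\alpha(4)=-Q$ with $Q>2q$ lets the scoring function depend on the instance size~$q$, which is not permitted: a single consistent family has a fixed value $\alpha(4)$, and then your separation inequality $(q-1)\alpha(1)+\alpha(4)<q\,\alpha(3)$, equivalently $\alpha(1)-\alpha(4)>q\bigl(\alpha(1)-\alpha(3)\bigr)$, fails for all sufficiently large $q$ since $\alpha(1)>\alpha(3)$. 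In other words, for any fixed DPSF, your YES/NO gap in the utilitarian sum collapses once $q$ is large enough. The reductions of \citet{Procaccia2008:On-the-complexity} avoid this by a more elaborate construction (essentially inflating the number of voters and interleaving alternatives so that an uncovered voter's best committee member is forced far down the ranking, making the gap scale with the instance rather than with the DPSF). To repair your proposal you would need either to adopt that construction, or to add enough padding alternatives between the ``top block'' and the remaining subset-alternatives so that any uncovered voter's best committee member lies at position $\Theta(q)$, which recreates the required $\Theta(q)$ drop using only that $\alpha$ is strictly decreasing.
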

Note that reduction from any NP-complete problem works here and there is nothing special with \xc{}. 

\notshow{
\section{Semi-Random Hardness of \ks{}: Proof of Theorem \ref{thm:hard-ks}}\label{appendix:kemeny}
We introduce some notations before the statement of assumption and the proof. For a profile $P \in \ml(\ma_m)^n$, its \emph{weighted majority graph \wmg(P)} is a weighted directed graph, and its vertices are represented by $\ma_m$. For any pair of alternatives $a,b\in \ma_m$, the weight on edge $a \rightarrow b$ is the number of agents that prefer $a$ to $b$ minus the number of agents that prefer $b$ to $a$. For each 3-cycle $a\rightarrow b \rightarrow c \rightarrow a$, its weight is defined as the sum of the weights on its three edges $a\rightarrow b$, $b\rightarrow c$, and $c \rightarrow a$. 
\begin{assumption}[\cite{Xia2021:The-Smoothed}]\label{asmpt:kemeny}$\vec\mm$ is P-samplable, neutral, and satisfies the following condition: there exist constants $k\ge 0$ and $A>0$ such that for any $m\ge 3$, there exist $\pi_{3c}\in \Pi_m$ such that $\wmg(\pi_{3c})$ has a 3-cycle $\gcyc$ with weight at least $\frac{A}{m^k}$
\end{assumption}

Assumption \ref{asmpt:kemeny} is weaker than Assumption \ref{asmpt:general-hard}. That's because in $\pi$ guaranteed by Assumption \ref{asmpt:general-hard}, the top-$K$ ranking remains unchanged with probability at least $1-\frac{1}{K}$, which implies that the 3-cycle formed by the top-3 alternatives has non-negligible weight at least $1-\frac{2}{K}$ with $K = m^{\frac{1}{d}}$ for constant $d$. 

Besides, $\alpha_m$-IC such that $\alpha_m \in [0,1-\frac{1}{m^d}]$ has a cycle of weight at least $\mathcal{O}(\frac{1}{m^d})$ and thus also satisfies Assumption \ref{asmpt:kemeny}. Therefore, Theorem \ref{thm:hard-ks} is a direct corollary of the following theorem.

\begin{definition}[\bf {\sc Semi-Random}-\ks{}]
Fix a series of single-agent preference models $\vec\mm$. Given $t \in \mathbb{N}$ and a semi-random profile $P$ drawn from $\mm_m$, we are asked to decide whether there exists an alternative whose Kemeny score of $a$ is at most $t$, with probability at least $1-\frac{1}{m}$.
\end{definition}
\begin{theorem}[\bf Smoothed Hardness of Kemeny]\label{thn:hard-ks-general}
For any single-agent preference model $\vec\mm$ that satisfies Assumption \ref{asmpt:kemeny}, there exists no polynomial-time algorithm for {\sc Semi-Random}-\ks{} unless {\sc NP}$=${\sc ZPP}. 
\end{theorem}

\begin{proof}
Suppose that {\sc Semi-Random}-\ks{} has a polynomial-time algorithm, denoted as \alg{}. We use it to construct a coRP algorithm for the NP-complete problem {\sc Eulerian Feedback Arc Set (EFAS)}~\cite{Perrot2015:Feedback}, which implies NP $=$ ZPP as discussed in the proof of Theorem \ref{thm:hard-ds}. An instance of \efas{} is denoted by $(G,t)$, where $t\in\mathbb N$ and $G$ is a directed unweighted Eulerian graph, which means that there exists a closed Eulerian walk that passes each edge exactly once. We are asked to decide whether $G$ can be made acyclic by removing no more than $t$ edges. 

Given a single-agent preference model, a {\em (fractional) parameter profile} $P^\Theta\in \Theta_m^n$ is a collection of $n>0$ parameters, where $n$ may not be an integer. Note that $P^\Theta$ naturally leads to a fractional preference profile, where the weight on each ranking represents its total weighted ``probability'' under all parameters in $P^\Theta$.

Let $(G = (V,E),t)$ be any \efas{} instance, where $|V| = m$.
\begin{claim}[\cite{Xia2021:The-Smoothed}]
We can construct a fractional preference profile $P_\Theta^G$ in polynomial time in $m$ such that 
\begin{itemize}
    \item $|P^\Theta_G| = \mathcal{O}(m^{k+2})$,
    \item $P_G^\Theta$ consists of $\mathcal{O}(m^5)$ types of parameters, 
    \item $\wmg(P_G^\Theta) = G$.
\end{itemize}
\end{claim}
 Let $K = 13+2k$, which means that $K>12$. We first define a parameter profile $P_G^{\Theta*}$ of $n=\Theta(m^{K})$ parameters that is approximately $\frac{m^{K}}{|P_G^{\Theta}|}$ copies of $P_G^\Theta$ up to $\mathcal{O}(m^5)$ in  $L_\infty$ error. Formally, let 
\begin{equation}\label{eq:pg}
P_G^{\Theta*} = \left\lfloor P_G^\Theta \cdot \dfrac{m^{K}}{|P_G^{\Theta}|}\right\rfloor
\end{equation}

Let $n=|P_G^{\Theta*}|$. Because the number of different types of parameters in $P_G^{\Theta*}$ is $\mathcal{O}(m^5)$, we have $n = m^{K}-\mathcal{O}(m^5)$, $\|\wmg(P_G^{\Theta*}) - \wmg(P_G^{\Theta} \cdot \frac{m^{K}}{|P_G^{\Theta}|}) \|_\infty  = \mathcal{O}(m^5)$, and $\|\wmg(P_G^{\Theta*}) - G \cdot \frac{m^{K}}{|P_G^{\Theta}|}) \|_\infty  = \mathcal{O}(m^5)$. Let $f(G,R)$ denote the number of backward arcs of linear order $R$ in graph $G$. Since $\wmg(P_G^\Theta) = G = (V,E)$, we have $ kt\left(P_G^\Theta \cdot \dfrac{m^{K}}{|P_G^{\Theta}|},R\right) = M + t\cdot f(G,R)$ where $M = |E|\cdot\left(\frac{m^K}{2}-\frac{m^K}{2|P_G^\Theta|}+(\binom{m}{2}-|E|\right)\cdot\frac{m^K}{2}$ =$ \frac{m^K}{2}\left(\binom{m}{2}-\frac{|E|}{|P_G^\Theta|}\right)$.

\begin{algorithm}
\caption{Algorithm for \efas{}.}
{\bf Input}: \efas{} instance $(G,t)$, $\alg$ for $\vec\mm$-Smooothed-\ks{}
\begin{algorithmic}[1] 
\STATE Compute a parameter profile $P_G^{\Theta*}$ according to (\ref{eq:pg}).
\STATE Sample a profile $P'$ from $\vec \mm_m$ given $P_G^{\Theta*}$.
\IF {$\|\wmg(P')-G_n\|_1 > \binom{m}{2}\cdot m^{\frac{K+1}{2}}$}
\STATE Return YES.
\ENDIF
\STATE Run $\alg$ on $\left(P', M+t\cdot\frac{m^K}{2|P_G^\Theta|}+m^{k+10}\right)$.
\IF {\alg{} returns NO}
\STATE Return NO.
\ELSE
\STATE Return YES.
\ENDIF
\end{algorithmic}
\label{alg:rpalg-kemeny}
\end{algorithm}

We now prove that $\alg$ returns the correct answer to $(G,t)$ with probability at least $1-\exp(-\Omega(m))$. Let $G_n = G \cdot \frac{m^{K}}{|P_G^{\Theta}|}$. The following claim bounds the probability that $\wmg(P')$ is different from $G_n $ by more than $\Omega(m^{\frac{K+1}{2}})$.

\begin{claim}[\cite{Xia2021:The-Smoothed}]\label{claim:concentration}
$\Pr\left[\|\wmg(P') - G_n \|_1 >\binom{m}{2}\cdot m^{\frac{K+1}{2}}\right] <\exp(-\Omega(m))$.
\end{claim}

\begin{claim}\label{claim:kemeny-reduction}
If $\|\wmg(P') - G_n \|_1 \le \binom{m}{2}\cdot m^{\frac{K+1}{2}}$, then $\left(P', M+t\cdot\frac{m^K}{2|P_G^\Theta|}+m^{k+10}\right)$ is a YES instance of \ks{} if and only if $(G,t)$ is a YES instance of \efas{}.
\end{claim}
\begin{proof} If $(G,t)$ is a YES instance of \efas{}, then there exists a linear order $R$ of $|V|$ such that there are at most $t$ backward arcs in $G$ according to $R$. Considering $R$ as a ranking over alternatives, we have $ kt\left(P^\Theta_G \cdot \frac{m^K}{|P^\Theta_G|},R\right)\le M+t\cdot\frac{m^K}{2|P_G^\Theta|}$. According to Claim~\ref{claim:concentration}, for any ranking $R$, by assumption we know $|kt(P',R) - kt(P_G^\Theta \cdot \dfrac{m^{K}}{|P_G^{\Theta}|},R)| =  \mathcal{O}(m^{\frac{K+5}{2}})$. Therefore, the kemeny score of any ranking $R$ is at most 
\begin{align*}
    kt(P',R) &\le kt\left(P_G^\Theta \cdot \dfrac{m^{K}}{|P_G^{\Theta}|},R\right) + \mathcal{O}(m^{\frac{K+5}{2}}) \\
    &< M+t\cdot\frac{m^K}{2|P_G\Theta|}+m^{k+10},
\end{align*} 
which means $\left(P', M+t\cdot\frac{m^K}{2|P_G^\Theta|}+m^{k+10}\right)$ is a YES instance.

If $(G,t)$ is a NO instance of \efas{}, then for any linear order $R$ of $|V|$, there are at least $t+1$ backward arcs in $G$ according to $R$. We have for any $R \in \ml(\ma_m)$, $ kt\left(P_G^\Theta\cdot\frac{m^K}{|P_G^\Theta|},R\right)\ge M+(t+1)\cdot\frac{m^K}{2|P_G^\Theta|}.$ Therefore, for any $R \in \ml(\ma_m)$, we have
\begin{align*}
    kt(P',R) &\ge kt\left(P_G^\Theta \cdot \dfrac{m^{K}}{|P_G^{\Theta}|},R\right) - \mathcal{O}(m^{\frac{K+5}{2}})\\
    &\ge M+t\cdot\frac{m^K}{2|P_G^\Theta|}+\frac{m^K}{2|P_G^\Theta|}-\mathcal{O}(m^{\frac{K+5}{2}})\\
    &= M+t\cdot\frac{m^K}{2|P_G^\Theta|} + \Theta(m^{k+11}) - \mathcal{O}(m^{k+9})\\
    &> M+t\cdot\frac{m^K}{2|P_G^\Theta|}+m^{k+10},
\end{align*}
which means $\left(P', M+t\cdot\frac{m^K}{2|P_G^\Theta|}+m^{k+10}\right)$ is a NO instance of \ks{}.
\end{proof}

Note that Algorithm~\ref{alg:rpalg-kemeny} only returns NO in line 8, when $\|\wmg(P') - G_n \|_1 >\binom{m}{2}\cdot m^{\frac{K+1}{2}}$ and \alg{} returns NO. By Claim~\ref{claim:kemeny-reduction}, we know that Algorithm~\ref{alg:rpalg-kemeny} never returns NO for any YES instance of \efas{}, or equivalently, it always returns YES for YES instance. Since $\|\wmg(P') - G_n \|_1 \le\binom{m}{2}\cdot m^{\frac{K+1}{2}}$ holds with probability at least $1-\exp(-\Omega(m))$ and \alg{} returns with probability at least $1-\frac{1}{m}$, we know that Algorithm~\ref{alg:rpalg-kemeny} returns NO for NO instance of \efas{} with at least constant probability. This proves that \efas{} is in coRP and completes the proof.
\end{proof}
}
\section{Proof of Claim \ref{Claim:dodgson concentration}}\label{appendix:dodgson-concentration}
Fix any $i \in [n]$. Denote $\pi_i \in \Pi_m$ the preference distribution of agent $i$. Since $\alpha_m \ge 1-\frac{1}{m}$, we know that $\Pr_{\pi_i}[R] \ge \frac{m-1}{m\cdot m!}$ for any preference order $R \in \ml(\ma_m)$. Therefore
\[
    \Pr[a \lessdot_i b] \ge (m-1)!\cdot\frac{m-1}{m\cdot m!} = \frac{m-1}{m^2}.
\]
For each $i\in \{1,\cdots,n\}$, define 
$$
    X_i = \begin{cases}
    \frac{m-1}{m^2} & \mbox{if}~a \centernot\lessdot_i b,\\
    \frac{m-1}{m^2}-1 & \mbox{otherwise}.
    \end{cases}
$$
Then $|\{i\in \{1,\cdots,n\} | a \lessdot_i b\}| < \beta$ if and only if $|\{i\in \{1,\cdots,n\} | a \centernot\lessdot_i b\}| > n - \beta$, which happens only if 
\begin{align*}
    \sum_{i=1}^n X_i >& \frac{m-1}{m^2}\left(n-\beta\right)+\left(\frac{m-1}{m^2}-1\right)\beta\\
    =& \left(\frac{1}{4}-\frac{1}{2m}\right)\frac{n}{m}
    \\&\ge\frac{n}{12m}.
\end{align*}
Note that $\Pr[X_i = \frac{m-1}{m^2}] = 1-\Pr[a \lessdot b] \le 1 - \frac{m-1}{m^2}$. Setting $d =\frac{n}{12m}$ and $p =1-\frac{m-1}{m^2}$ in Lemma \ref{lem: chernoff} yields the desired result. 

\section{An Example of Fractional Parameter Profile and Fractional Preference Profile}
\label{appendix:example}
Let $\{a_1, a_2, a_3\}$ be three alternatives. For simplicity, we denote 
\begin{align*}
    R(i,j,k):= a_i \succ a_j \succ a_k, \quad  \forall \{i,j,k\} = \{1,2,3\}
\end{align*}
We define two parameters $\theta_1$ and $\theta_2$, each of which induced a distribution over $\ml(\{a_1, a_2, a_3\})$. The two distribution $\pi_{\theta_1}$ and $\pi_{\theta_2}$ are defined as
\begin{align*}
    \Pr_{r \sim \pi_{\theta_1}}[r = R(1,2,3)] = \Pr_{r \sim \pi_{\theta_1}}[r = R(1,3,2)] =  \frac{1}{2} \\
    \Pr_{r \sim \pi_{\theta_2}}[r = R(1,2,3)] = \Pr_{r \sim \pi_{\theta_2}}[r = R(3,2,1)] =  \frac{1}{2}
\end{align*}
A \emph{fractional parameter profile} is collection of parameters with possibly non-integer weights. Similarly, a \emph{fractional preference profile} is collection of preference orders with possibly non-integer weights. For example, $(\alpha \theta_1, \beta \theta_2)$ is a fractional parameter profile for any $\alpha,\beta \ge 0$ and its induced fractional preference profile is 
\begin{align*}
    \{ \frac{\alpha + \beta}{2} R(1,2,3), \frac{\alpha}{2} R(1,3,2), \frac{\beta}{2} R(3,2,1) \}.
\end{align*}

\section{Proof of Claim~\ref{claim:kt distance}}
\label{appendix:proof of kt distance}
Without loss of generality, we assume the linear order $R$ is \begin{align*}
    R = a_1 \succ a_2 \succ \cdots a_{m-1} \succ a_m.
\end{align*}
For any $1 \le i < j \le m$, we have $a_i \succ a_j$ in $R$. Now we calculate the pairwise disagreement between $R$ and the parameter profile $P^\Theta_G$. Note that $\wmg(P^\Theta_G) = G = (V,E)$ where $G$ is an unweighted directed graph. Fix $1 \le i < j \le m$, the pairwise comparison between alternatives $a_i$ and $a_j$ lies in one of the three cases below.
\paragraph{Case 1: There is no edge between $a_i$ and $a_j$ in $G$.} In this case we know the weight on $a_i \succ a_j$ and the weight on $a_j \succ a_i$ are equal to $\frac{|P^\Theta_G|}{2}$.
\paragraph{Case 2:  There is an edge $a_i \rightarrow a_j$ in $G$.} In this case we know the weight on $a_j \succ a_i$ is $\frac{|P^\Theta_G| - 1}{2}$.

\paragraph{Case 3:  There is an edge $a_j \rightarrow a_i$ in $G$.} In this case we know the weight on $a_j \succ a_i$ is $\frac{|P^\Theta_G| + 1}{2}$.

Since there are $|E|$ arcs in $G$ and $f(G,R)$ backward arcs of $R$ in $G$, we know there are $\binom{m}{2} - |E|$ pairwise comparisons in Case 1, $|E| - f(G,R)$ pairwise comparisons in Case 2, and $f(G,R)$ pairwise comparisons in Case 3. Summing the weights in all cases, we get 
\begin{align*}
    \kt\left(P_G^\Theta,R\right) &= \frac{|P^\Theta_G|}{2} \left( \binom{m}{2} - |E| \right) + \frac{|P^\Theta_G|-1}{2} (|E| - f(G,R)) + \frac{|P^\Theta_G|+1}{2} f(G,R) \\
    &= \frac{|P^\Theta_G|}{2} \binom{m}{2} - \frac{|E|}{2} + f(G,R)
\end{align*}
Therefore, we conclude by definition of $M$ that
\begin{align*}
    \kt\left(P_G^\Theta \cdot \frac{m^K}{|P^\Theta_G|},R\right) = \frac{m^K}{|P^\Theta_G|} \cdot \kt\left(P_G^\Theta,R\right) = M + \frac{m^K}{|P^\Theta_G|} \cdot  f(G,R).
\end{align*}

\end{document}